\providecommand*{\cupdot}{%
  \mathbin{%
    \mathpalette\@cupdot{}%
  }%
}
\newcommand*{\@cupdot}[2]{%
  \ooalign{%
    $\m@th#1\cup$\cr
    \hidewidth$\m@th#1\cdot$\hidewidth
  }%
}
\newcommand{\raisemath}[1]{\mathpalette{\raisem@th{#1}}}
\newcommand{\raisem@th}[3]{\raisebox{#1}{$#2#3$}}
\providecommand*{\sqplus}{%
  \mathbin{%
    \mathpalette\@sqplus{}%
  }%
}
\newcommand*{\@sqplus}[2]{%
  \ooalign{%
    $\m@th#1\sqcup$\cr
    \hidewidth$\m@th#1\raisemath{1.4pt}{\scriptscriptstyle{+}}$\hidewidth
  }%
}
\newcommand{\DELTA}{\bigtriangleup}
\newcommand{\M}{\ensuremath{\operatorname{\mathbb{M}}}}
\newcommand{\T}{\ensuremath{\operatorname{\mathcal{T}}}}
\renewcommand{\P}{\ensuremath{\operatorname{\mathbb{P}}}}
\newcommand{\Pmax}{\ensuremath{\operatorname{\mathbb{P_{\max}}}}}
\newcommand{\Pmaxi}[1]{\ensuremath{\operatorname{\mathbb{P}^{#1}_{\max}}}}
\newcommand{\MD}{\ensuremath{\operatorname{MD}}}
\newcommand{\MDT}{\ensuremath{\operatorname{MDT}}}
\newcommand{\MDP}{\ensuremath{\operatorname{MDP}}}
\newcommand{\opt}{\ensuremath{\operatorname{opt}}}
\newcommand{\Go}{\ensuremath{G_{\opt}}}
\newcommand{\Fo}{\ensuremath{F_{\opt}}}
\newcommand{\out}[1]{out$_{#1}$}
\newcommand{\lca}{\ensuremath{\operatorname{lca}}}
\newcommand{\merge}{\ensuremath{\sqplus}}
\newcommand{\Merge}{\ensuremath{\displaystyle{\sqplus}}}
\newcommand{\eqcl}{\mathrel{\mathsmaller{\mathsmaller{^{\boldsymbol{\sqsubseteq}}}}}}
\newcommand{\mb}{\ensuremath{\mathbb}}
\newcommand{\mc}{\ensuremath{\mathcal}}
 \newtheorem{theorem}{Theorem}[section]
 \newtheorem{lemma}[theorem]{Lemma}
 \newtheorem{corollary}[theorem]{Corollary}
 \newtheorem{proposition}[theorem]{Proposition}
 \newtheorem{problem}[theorem]{Problem}
 \newtheorem{definition}{Definition}
 \newtheorem{remark}[theorem]{Remark}
\author{Marc Hellmuth\affiliationmark{1,2}\thanks{Email: \texttt{mhellmuth@mailbox.org} (Corresponding Author)}   
  \and Adrian Fritz\affiliationmark{2}
  \and Nicolas Wieseke\affiliationmark{3}\thanks{
    Supported  
    by the German Research Foundation (DFG) (Proj.\ No.\
    MI439/14-1}
   \and Peter F. Stadler\affiliationmark{4-8}
}
\title[Techniques for the  Cograph Editing Problem]{Techniques for the  Cograph Editing Problem: \\ Module Merge is equivalent to Editing $P_4$'s}
\affiliation{
  Department of Mathematics and Computer Science, University of Greifswald, Greifswald, Germany\\
  Center for Bioinformatics, Saarland University, Saarbr{\"u}cken, Germany\\
  Parallel Computing and Complex Systems Group, Department of Computer Science, Leipzig
  University, Germany \\
  Bioinformatics Group, Dept.\ of Computer Science \& Interdisciplinary
  Center of Bioinformatics, Leipzig University\\
  Max Planck Institute for Mathematics in the Sciences, Leipzig \\
  Institute for Theoretical Chemistry, University of Vienna, Austria\\
  Center for non-coding RNA in Technology and Health, Copenhagen Univ., 
  Denmark\\
  The Santa Fe Institute, Santa Fe, USA}
\keywords{Cograph Editing, Module Merge, Twin Relation, Strong Prime Modules}
\begin{document}
\publicationdetails{VOL}{2015}{ISS}{NUM}{SUBM}
\maketitle
\begin{abstract}
  Cographs are graphs in which no four vertices induce a simple connected
  path $P_4$. Cograph editing is to find for a given graph $G = (V,E)$ a
  set of at most $k$ edge additions and deletions that transform $G$ into a
  cograph. This combinatorial optimization problem is NP-hard. It has,
  recently found applications in the context of phylogenetics, hence good
  heuristics are of practical importance.  \newline 
  It is well-known that the cograph editing problem can be solved
  independently on the so-called strong prime modules of the modular
  decomposition of $G$. We show here that editing the induced $P_4$'s of a
  given graph is equivalent to resolving strong prime modules by means of a
  newly defined merge operation $\merge$ on the submodules. This observation
  leads to a new exact algorithm for the cograph editing problem that can
  be used as a starting point for the construction of novel heuristics.
\end{abstract}

\sloppy

\section{Introduction}

Cographs are among the best-studied graph classes. In particular the fact
that many problems that are NP-complete for arbitrary graphs become
polynomial-time solvable on cographs \cite{Corneil:85, BLS:99,Gao:13} makes
them an attractive starting point for constructing heuristics.  As noted
already in \cite{Corneil:81}, the input for several combinatorial
optimization problems, such as exam scheduling or several variants of
clustering problems, is naturally expected to have few induced
$P_4$s. Since graphs without an induced $P_4$ are exactly the cographs,
identifying the closest cograph and solving the problem at hand for the
modified input becomes a viable strategy.

It was recently shown that \emph{orthology}, a key concept in evolutionary
biology in phylogenetics, is intimately tied to cographs. Two genes in a
pair of related species are said to be orthologous if their last common
ancestor was a speciation event.  The orthology relation on a set of genes
forms a cograph \cite{HHH+13}.  This relation can be estimated directly
from biological sequence data, albeit in a necessarily noisy
form. Correcting such an initial estimate to the nearest cograph, i.e.,
cograph-editing, thus has recently become an computational problem of
considerable practical interest in computational biology
\cite{HWL+15}. However, the (decision version of the) problem to edit a
given graph into a cograph is NP-complete \cite{Liu:11,Liu:12}.  We showed
in \cite{HWL+15} that the cograph-editing problem is amenable to
formulations as Integer Linear Programs (ILP).  Computational experiments
showed, however, that the performance of the ILP scales not very favorably,
thus limiting exact ILP solutions in practice to moderate-sized data. Fast
and accurate heuristics for cograph-editing are therefore of immediate
practical interest in the field of phylogenomics.

The cotree of a cograph coincides with the modular decomposition tree
\cite{gallai-67}, which is defined for all graphs. We investigate here how
edge editing on an approximate cograph is related to editing the
corresponding modular decomposition trees.

\section{Basic Definitions}

We consider simple finite undirected graphs $G=(V,E)$ without loops. The
notation $G+e$, $G-e$ and $G\DELTA e$ is is used to denote the graph
$(V,E\cup\{e\})$, $(V,E\setminus\{e\})$ and $(V,E\DELTA\{e\})$,
respectively.  A graph $H=(W,E')$ is a \emph{subgraph} of a graph
$G=(V,E)$, in symbols $H\subseteq G$, if $W\subseteq V$ and $E'\subseteq
E$.  If $H \subseteq G$ and $xy \in E'$ if and only if $xy\in E$ for all
$x,y\in W$, then $H$ is called an \emph{induced} subgraph.  We often denote
such an induced subgraph $H=(W,E')$ by $G[W]$.  A \emph{connected
  component} of $G$ is a connected induced subgraph that is maximal w.r.t.\
inclusion. The complement $\overline G$ of a graph $G=(V,E)$ has vertex set
$V$ and edge set $E(\overline G)=\{xy\mid x,y\in V, x\neq y, xy\notin E\}$.
The \emph{complete graph} $K_{|V|}=(V,E)$ has edge set $E={V\choose{2}}$.
We write $G\simeq H$ for two isomorphic graphs $G$ and $H$.
   
Let $G=(V,E)$ be a graph. The \emph{(open) neighborhood} $N(v)$ is defined
as $N(v)=\{x\mid vx\in E\}$. The \emph{(closed) \emph{neighborhood}} $N[v]$
is then $N[v] = N(v)\cup\{v\}$. If there is a risk of confusion we will
write $N_G(v)$, resp., $N_G[v]$ 
to indicate that the respective neighborhoods are taken w.r.t.\ $G$. The
\emph{degree} $\deg(v)$ of a vertex is defined as $\deg(v) = |N(v)|$.

A \emph{tree} is a connected graph that does not contain cycles. A
\emph{path} is a tree where every vertex has degree $1$ or $2$. A
\emph{rooted} tree $T=(V,E)$ is a tree with one distinguished vertex
$\rho\in V$. The first inner vertex $\lca(x,y)$ that lies on both unique
paths from two vertices $x$, resp., $y$ to the root, is called \emph{lowest
  common ancestor} of $x$ and $y$.  It is well-known that there is a
one-to-one correspondence between (isomorphism classes of) rooted trees on
$V$ and so-called hierarchies on $V$.  For a finite set $V$, a
\emph{hierarchy on $V$} is a subset $\mathcal C$ of the power set $\mathcal
P(V)$ such that $(i)$ $V\in \mathcal{C}$, $(ii)$ $\{x\}\in \mathcal{C}$ for
all $x\in V$ and $(iii)$ $p\cap q\in \{p, q, \emptyset\}$ for all $p, q\in
\mathcal{C}$.
\begin{theorem}[\cite{sem-ste-03a}]
  Let $\mc C$ be a collection of non-empty subsets of $V$.  Then, there is
  a rooted tree $T=(W,E)$ on $V$ with $\mc C = \{L(v)\mid v\in W\}$ if
  and only if $\mc C$ is a hierarchy on $V$.
  \label{A:thm:hierarchy}
\end{theorem}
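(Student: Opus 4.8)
The plan is to prove the two implications separately. The forward direction should be a routine verification of the three defining properties of a hierarchy; the real content lies in the converse, where one must construct a rooted tree from $\mathcal C$ and show that its system of clusters is exactly $\mathcal C$. Throughout, for a rooted tree $T$ and a vertex $v$ of $T$, I write $L(v)$ for the set of leaves of $T$ lying in the subtree rooted at $v$, so that $L(v)=\{v\}$ when $v$ is a leaf.

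For the ``only if'' part I would assume $T=(W,E)$ is a rooted tree on $V$ with root $\rho$ and $\mathcal C=\{L(v)\mid v\in W\}$, and then read off the three axioms: $V=L(\rho)\in\mathcal C$ gives $(i)$; since the leaf set of $T$ is $V$, every $x\in V$ is a leaf with $L(x)=\{x\}$, giving $(ii)$; and for $u,v\in W$ either one is an ancestor of the other --- so $L(u)$ and $L(v)$ are nested --- or the subtrees rooted at $u$ and $v$ are vertex-disjoint and hence $L(u)\cap L(v)=\emptyset$, which is $(iii)$.

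For the ``if'' part, I would assume $\mathcal C$ is a hierarchy, order it by inclusion, and observe that by $(i)$ the set $V$ is its unique maximum. I then take $T$ to be the Hasse diagram of $(\mathcal C,\subseteq)$ rooted at $V$: the vertex set is $\mathcal C$, and $p$ is joined to $q$ whenever $q\subsetneq p$ and no $r\in\mathcal C$ satisfies $q\subsetneq r\subsetneq p$. The crucial step --- and the main obstacle --- is to verify that $T$ is a tree, for which it suffices to show that every $p\in\mathcal C\setminus\{V\}$ has a \emph{unique} cover. Indeed, if $q_1$ and $q_2$ both cover $p$, then $\emptyset\neq p\subseteq q_1\cap q_2$, so property $(iii)$ forces $q_1\subseteq q_2$ or $q_2\subseteq q_1$; either case contradicts one of $q_1,q_2$ being a cover of $p$ unless $q_1=q_2$. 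Hence every non-root vertex has exactly one parent; following parents is strictly increasing and so terminates at the unique parent-free element $V$, whence $T$ is connected with $|\mathcal C|-1$ edges, i.e.\ a tree rooted at $V$.

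It remains to pin down the leaves and recompute the clusters. A vertex of $T$ is a leaf precisely when it is minimal in $(\mathcal C,\subseteq)$; by $(ii)$, if some $p\in\mathcal C$ had $|p|\geq 2$, then picking $x\in p$ gives $\emptyset\neq\{x\}=\{x\}\cap p$ with $\{x\}\subsetneq p$, contradicting minimality, so the leaves of $T$ are exactly the singletons $\{x\}$ with $x\in V$; identifying $\{x\}$ with $x$ turns $T$ into a rooted tree on $V$. Since the ancestor relation of a finite rooted tree is the transitive closure of its covering relation, in $T$ it agrees with $\supseteq$ on $\mathcal C$; hence for $p\in\mathcal C$ the leaves below $p$ are exactly the $\{x\}$ with $x\in p$, so $L(p)=p$ and therefore $\{L(v)\mid v\in\mathcal C\}=\mathcal C$, as required. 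The statement asks only for existence, so no uniqueness argument is needed --- though one could add that any realizing tree induces the same inclusion order on its clusters and is therefore isomorphic to $T$.
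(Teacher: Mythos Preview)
The paper does not actually prove this theorem: it is quoted from \cite{sem-ste-03a} as background material and stated without proof, so there is no in-paper argument to compare against.

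Your proof is correct and is the standard argument for this classical characterization. The forward direction is routine, as you say. In the reverse direction the key steps --- uniqueness of covers via the laminarity axiom~(iii), identification of the minimal elements of $(\mathcal C,\subseteq)$ with the singletons via~(ii), and the verification that $L(p)=p$ --- are all handled correctly. One small point you leave implicit is the \emph{existence} of a cover for each $p\in\mathcal C\setminus\{V\}$: you need that $\{q\in\mathcal C : q\supsetneq p\}$ is nonempty (it is, since $V$ belongs to it by~(i)) and contains a minimal element. Since $V$ is finite throughout the paper, $\mathcal C$ is finite and this is immediate; with that noted, the argument is complete.
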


\section{Cographs, $\boldsymbol{P_4}$-sparse Graphs and the Modular
  Decomposition}

\subsection{Introduction to Cographs}

\emph{Cographs} are defined as the class of graphs formed from a single
vertex under the closure of the operations of union and complementation,
namely: (i) a single-vertex graph $K_1$ is a cograph; (ii) the disjoint
union $G=(V_1\cup V_2,E_1\cup E_2)$ of cographs $G_1=(V_1,E_1)$ and
$G_2=(V_2,E_2)$ is a cograph; (iii) the complement $\overline{G}$ of a
cograph $G$ is a cograph.  The name cograph originates from
\emph{complement reducible graphs}, as by definition, cographs can be
``reduced'' by stepwise complementation of connected components to totally
disconnected graphs \cite{Seinsche1974}.

It is well-known that for each induced subgraph $H$ of a cograph $G$ either
$H$ is disconnected or its complement $\overline H$ is disconnected
\cite{BLS:99}.  This, in particular, allows representing the structure of a
cograph $G=(V,E)$ in an unambiguous way as a rooted tree $T=(W,F)$, called
\emph{cotree}: If the considered cograph is the single vertex graph $K_1$,
then output the tree $(\{u\}, \emptyset)$. Else if the given cograph $G$ is
connected, create an inner vertex $u$ in the cotree with label ``series'',
build the complement $\overline G$ and add the connected components of
$\overline G$ as children of $u$. If $G$ is not connected, then create an
inner vertex $u$ in the cotree with label ``parallel'' and add the
connected components of $G$ as children of $u$.  Proceed recursively on the
respective connected components that consists of more than one
vertex. Eventually, this cotree will have leaf-set $V\subseteq W$ and inner
vertices $u\in W\setminus V$ are labeled with either ``parallel'' or
``series'' s.t.\ $xy\in E$ if and only if $u=\lca_T(x,y)$ is labeled
``series''. Since a cograph and its cotree are uniquely determined by each
other, one can use the cotree representation to test in linear-time whether two
cographs are isomorphic \cite{Corneil:85}.

The complement of a path on four vertices $P_4$ is again a $P_4$ and hence,
such graphs are not cographs.  Intriguingly, cographs have indeed a quite
simple characterization as \emph{$P_4$-free} graphs, that is, no four
vertices induce a $P_4$.  A number of further equivalent characterizations
are given in \cite{BLS:99} and Theorem \ref{thm:cograph-characterization}.
Determining whether a graph is a cograph can be done in linear time
\cite{Corneil:85,BCHP:08}.

\subsection{Modules and the Modular Decomposition}

The concept of \emph{modular decompositions (MD)} is defined for arbitrary
graphs $G$. It present the structure of $G$ in the form of a tree that
generalizes the idea of cotrees. However, in general much more information
needs be stored at the inner vertices of this tree if the original graph is
to be recovered. 

The MD is based on modules, which are also known as autonomous sets 
\cite{MR-84, Moh:85}, closed sets \cite{gallai-67}, clan \cite{EGMS:94}, 
stable sets, clumps \cite{Blass:78} or externally related sets \cite{HM-79}.
A \emph{module} of a given graph $G = (V,E)$ is
a subset $M\subseteq V$ with the property that for all vertices in $x,y\in
M$ holds that $N(y)\setminus M = N(x)\setminus M$. The vertices within a
given module $M$ are therefore not distinguishable by the part of their
neighborhoods that lie ``outside'' $M$. Modules can thus be seen as
generalization of the notion of connected components.  We denote with
$\M(G)$ the set of all modules of $G=(V,E)$. Clearly, the vertex set $V$
and the singletons $\{v\}$, $v\in V$ are modules, called \emph{trivial}
modules. A graph $G$ is called \emph{prime} if it only contains trivial
modules.  For a module $M$ of $G$ and a vertex $v\in M$, we define the
\out{M}-neighborhood of $v$ in $G$ as the set $N^M_G(v):=N_{G}(v)\setminus
M$. Since for each $v,w\in M$ the \out{M}-neighborhoods are identical,
$N^M_G(v)=N^M_G(w)$, we can equivalently define the \out{M}-neighborhood of
the module $M$ as $N^M_G:=N^M_G(v),\ v\in M$.

For a graph $G=(V,E)$ let $M$ and $M'$ be disjoint subsets of $V$. We say
that $M$ and $M'$ are adjacent (in $G$) if each vertex of $M$ is adjacent
to all vertices of $M'$; the sets are non-adjacent if none of the vertices
of $M$ is adjacent to a vertices of $M'$. Two disjoint modules are either
adjacent or non-adjacent \cite{Moh:85}. One can therefore define the
\emph{quotient graph} $G/\M'$ for an arbitrary subset $\M'\subseteq \M(G)$
of pairwise disjoint modules: $G/\M'$ has $\M'$ as its vertex set and
$(M_i,M_j)\in E(G/\M')$ if and only if $M_i$ and $M_j$ are adjacent in $G$.

A module $M$ is called \emph{strong} if for any module $M'\ne M$ either $M
\cap M' = \emptyset$, or $M \subseteq M'$, or $M \subseteq M'$, i.e., a
strong module does not overlap any other module. The set of all strong
modules $\MD(G)\subseteq \M(G)$ thus forms a hierarchy, the so-called
\emph{modular decomposition} of $G$. While arbitrary modules of a graph
form a potentially exponential-sized family, however, the sub-family of
strong modules has size $O(|V(G)|)$ \cite{habib2004simple}.

Let $\P=\{M_1, \dots, M_k\}$ be a partition of the vertex set of a graph
$G=(V,E)$. If every $M_i\in \P$ is a module of $G$, then $\P$ is a modular
partition of $G$. A non-trivial modular partition $\P=\{M_1, \dots, M_k\}$
that contains only maximal (w.r.t\ inclusion) strong modules is a maximal
modular partition.  We denote the (unique) maximal modular partition of $G$
by $\Pmax(G)$. We will refer to the elements of $\Pmax(G[M])$ as the
\emph{the children of $M$}. This terminology is motivated by the following
  considerations:

The hierarchical structure of $\MD(G)$ gives rise to a canonical tree
representation of $G$, which is usually called the \emph{modular
  decomposition tree} $\MDT(G)$ \cite{MR-84,HP:10}. The root of this tree is
the trivial module $V$ and its $|V|$ leaves are the trivial modules
$\{v\}$, $v \in V$. The set of leaves $L_v$ associated with the subtree
rooted at an inner vertex $v$ induces a strong module of $G$. Moreover,
inner vertices $v$ are labeled ``parallel'' if the induced subgraph
$G[L_v]$ is disconnected, ``series'' if the complement $\overline{G[L_v]}$
is disconnected and ``prime'' otherwise, i.e., if $G[L_v]$ and
$\overline{G[L_v]}$ are both connected. The module $L_v$ of the induced
subgraph $G[L_v]$ associated to a vertex $v$ labeled ``prime'' is called
prime module. Note, the latter does not imply that $G[L_v]$ is prime,
however, in all cases $G[L_v]/\Pmax(G[L_v])$ is prime \cite{HP:10}.
Similar to cotrees it holds that $xy\in E$ if $u=\lca_{\MDT(G)}(xy)$ is
labeled ``series'', and $xy\notin E$ if $u=\lca_{\MDT(G)}(xy)$ is labeled
``parallel''. However, to trace back the full structure of a given graph
$G$ from $\MDT(G)$ one has to store additionally the information of the
subgraph $G[L_v]/\Pmax(G[L_v])$ in the vertices $v$ labeled ``prime''.
Although, $\MD(G)\subseteq \M(G)$ does not represent all modules, we state
the following remarkable fact \cite{Moh:85,DGC:1997}: Any subset
$M\subseteq V$ is a module if and only if $M\in \MD(G)$ or $M$ is the union
of children of non-prime modules.  Thus, $\MDT(G)$ represents at least
implicitly all modules of $G$.

A simple polynomial time recursive algorithm to compute $\MDT(G)$ is as follows
\cite{HP:10}: 
(1) compute the maximal modular partition \Pmax(G); (2) 
label the root node according to the parallel, series or prime type of $G$; 
(3) for each strong module $M$ of $\Pmax$, compute $\MDT(G[M])$ 
and attach it to the root node.
The first polynomial algorithm to compute the modular decomposition 
is due to Cowan \emph{et al.\ }\cite{CJS:72}, and it runs in $O(|V|^4)$.
Improvements are due to Habib and Maurer \cite{HM-79}, who proposed
a cubic time algorithm, and to M\"{u}ller and
Spinrad \cite{MS:89}, who designed a quadratic time algorithm. The
first two linear time algorithms appeared independently in 1994 \cite{CH:94, CS94}. 
Since then a series of simplified algorithms has been published, some
running in linear time \cite{DGC:01,CS:99,TCHP:08}, and others in almost linear time 
\cite{DGC:01,CS:00,HPV:00, habib2004simple}. 

We give here two simple lemmata for further reference.

\begin{lemma}
  Let $M$ be a module of a graph $G=(V,E)$ and $M'\subseteq M$. Then $M'$
  is a module of $G[M]$ if and only if $M'$ is a module of $G$.\\
  Furthermore, suppose $M\in \MD(G)$ is strong module of $G$. Then $M'$ is
  a strong module of $G[M]$ if and only if $M'$ is a strong module of $G$.
\label{lem:module-subg}
\end{lemma}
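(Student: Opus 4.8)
The plan is to prove the two equivalences separately. In both cases the only computational ingredient I need is the identity $N_{G[M]}(x)=N_G(x)\cap M$, valid for every $x\in M$, together with the repeated use of the inclusion $M'\subseteq M$. For the structural claim about strong modules I will additionally invoke the characterisation (used in the prose above) that a module is strong exactly when it does not overlap any other module, i.e.\ for every module $N$ one has $N\cap M'=\emptyset$, $N\subseteq M'$, or $M'\subseteq N$.

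For the first equivalence I would fix arbitrary $x,y\in M'$ and compare their \out{M'}-neighborhoods. Since $M'\subseteq M$,
$$N_{G[M]}(x)\setminus M' \;=\; (N_G(x)\cap M)\setminus M' \;=\; (N_G(x)\setminus M')\cap M,$$
and likewise for $y$. Hence if $M'$ is a module of $G$, then $N_G(x)\setminus M' = N_G(y)\setminus M'$, and intersecting with $M$ shows at once that $M'$ is a module of $G[M]$. Conversely, assume $M'$ is a module of $G[M]$ and $M$ is a module of $G$. I would decompose $N_G(x)\setminus M'$ into its part inside $M$, namely $(N_G(x)\setminus M')\cap M = N_{G[M]}(x)\setminus M'$, and its part outside $M$, namely $(N_G(x)\setminus M')\setminus M = N_G(x)\setminus M$ (here using $M'\subseteq M$). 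The inside parts coincide for $x$ and $y$ because $M'$ is a module of $G[M]$, and the outside parts coincide because $M$ is a module of $G$. Thus $N_G(x)\setminus M' = N_G(y)\setminus M'$, so $M'$ is a module of $G$.

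For the second equivalence, assume throughout that $M\in\MD(G)$ is strong. If $M'\subseteq M$ is strong in $G$, it is a module of $G[M]$ by the first part; and if some module $N'$ of $G[M]$ overlapped $M'$, then (being a subset of $M$) $N'$ would be a module of $G$ by the first part, overlapping the strong module $M'$ of $G$ — impossible; hence $M'$ is strong in $G[M]$. Conversely, let $M'$ be strong in $G[M]$; by the first part it is a module of $G$, so it remains to show it overlaps no module $N$ of $G$. Since $M$ is strong, $M$ does not overlap $N$, so $M\cap N=\emptyset$, $M\subseteq N$, or $N\subseteq M$. In the first case $N\cap M'=\emptyset$ because $M'\subseteq M$; in the second case $M'\subseteq M\subseteq N$; and in the third case $N$ is a module of $G[M]$ by the first part and therefore cannot overlap the strong module $M'$ of $G[M]$. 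In every case $N$ does not overlap $M'$, so $M'$ is strong in $G$.

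I do not expect a genuine obstacle: the argument is elementary once the neighborhoods and set differences are split along the partition of $V$ induced by $M$ and $V\setminus M$. The only points requiring care are the bookkeeping facts $M'\cup M=M$ and $M'\cap M=M'$ (so that, e.g., $(N_G(x)\setminus M')\setminus M = N_G(x)\setminus M$), and making sure that the three-way case distinction in the converse of the second part is genuinely exhaustive, which it is precisely because $M$ is assumed strong.
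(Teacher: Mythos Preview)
Your proof is correct and follows essentially the same approach as the paper's own proof: both parts rely on splitting the \out{M'}-neighborhood of a vertex $x\in M'$ into the piece inside $M$ (controlled by $M'$ being a module of $G[M]$) and the piece outside $M$ (controlled by $M$ being a module of $G$), and the strong-module claim is handled via the ``no overlap'' characterisation together with the first part. Your three-way case split on how an arbitrary module $N$ of $G$ relates to the strong module $M$ is in fact a little cleaner and more explicit than the paper's contradiction argument, which leaves the sub-case $M\subseteq M''$ implicit, but the underlying idea is identical.
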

\begin{proof}
  Let $M\in \M(G)$. If $M'$ is a module of $G[M]$, then all $x,y \in M'$
  have the same \out{M'}-neighbors in $G[M]$.  Since $M$ is a module of $G$
  and $M'\subseteq M$, for all $x,y \in M'$ the \out{M'}-neighborhood and
  \out{M}-neighborhood in $G[V\setminus M]$ are identical.  Thus, all $x,y
  \in M'$ have the same \out{M'}-neighborhood in $G$.

  If $M'\subseteq M$ is a module in $G$ then, in particular, the
  \out{M'}-neighborhood in $G[M]$ must be identical for all $x,y\in M'$,
  and thus $M'$ is a module in $G[M]$.

  Let $M\in \MD(G)$ and assume that $M'$ is a strong module of $G[M]$.
  Since $M$ is a strong module in $G$ it does not overlap any other modules
  in $G$. Assume for contradiction that $M'$ is not a strong module of $G$.
  Hence $M'$ must overlap some module $M''$ in $G$. This module $M''$
  cannot be entirely contained in $M$ as otherwise, $M''$ and $M'$ overlap
  in $G[M]$ implying that $M'$ is not a strong module of $G[M]$, a
  contradiction.  But then $M$ and $M''$ must overlap, contradicting that
  $M\in \MD(G)$.

  If $M'$ is a strong module of $G$ then it does not overlap \emph{any}
  module of $G$. As every module of $G$ is also a module of $G[M]$ (and
  vice versa) it follows that $M'$ does not overlap \emph{any} module of
  $G[M]$ and thus, $M'$ must be a strong module of $G[M]$.
    \end{proof}

\begin{lemma}
  Let $G$ be an arbitrary graph and $G'$ be a cograph
  on the same vertex set $V$   so that $\M(G)\subset
  \M(G')$, i.e., every module of $G$ is a module of $G'$. Moreover, let
  $\Pmax:=\Pmax(G)$ be the maximal modular partition of $G$.  Then $\Pmax$
  is a modular partition of $G'$ and $G'/\Pmax$ is a cograph.
  \label{lem:quotient-cograph}
\end{lemma}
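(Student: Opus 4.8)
The plan is to treat the two assertions in turn. The first is essentially immediate from the definitions; the second I would reduce to the elementary fact that an induced subgraph of a cograph is again a cograph, via the standard identification of a quotient graph with an induced subgraph on a set of representatives.

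\emph{First assertion.} By definition $\Pmax=\Pmax(G)=\{M_1,\dots,M_k\}$ is a partition of $V$ each of whose blocks $M_i$ is a (maximal strong) module of $G$, hence in particular a module of $G$. By the hypothesis $\M(G)\subseteq\M(G')$, every $M_i$ is then also a module of $G'$. A partition of $V$ all of whose blocks are modules of $G'$ is exactly what is called a modular partition of $G'$; therefore $\Pmax$ is a modular partition of $G'$, and in particular the quotient $G'/\Pmax$ is well defined (the $M_i$ are pairwise disjoint modules of $G'$).

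\emph{Second assertion.} The key step is to realize $G'/\Pmax$ as an induced subgraph of $G'$. Choose a representative $v_i\in M_i$ for each $i$ and set $R=\{v_1,\dots,v_k\}$. Since the $M_i$ are pairwise disjoint modules of $G'$, any two of them are either adjacent or non-adjacent in $G'$; consequently $v_iv_j\in E(G')$ if and only if $M_i$ and $M_j$ are adjacent in $G'$, which by definition of the quotient is equivalent to $(M_i,M_j)\in E(G'/\Pmax)$. Hence $v_i\mapsto M_i$ is a graph isomorphism, so $G'[R]\simeq G'/\Pmax$. Now $G'$ is a cograph, i.e.\ $P_4$-free, so its induced subgraph $G'[R]$ is $P_4$-free and therefore again a cograph; thus $G'/\Pmax\simeq G'[R]$ is a cograph.

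I do not anticipate a genuine obstacle. The two points that deserve a little care are: spelling out why the representatives faithfully encode the adjacencies of the quotient (which rests on the fact, recalled above, that two disjoint modules are uniformly adjacent or non-adjacent), and invoking the $P_4$-free characterization of cographs — rather than the recursive union/complement definition — so that closure of cographs under taking induced subgraphs is immediate.
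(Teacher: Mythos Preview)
Your proof is correct and follows essentially the same approach as the paper: both use that the $M_i$ are modules of $G'$ (so $\Pmax$ is a modular partition of $G'$), then invoke the fact that disjoint modules are uniformly adjacent or non-adjacent together with the $P_4$-free characterization of cographs. The only cosmetic difference is that you package the second part as ``$G'/\Pmax\simeq G'[R]$, and induced subgraphs of cographs are cographs'', whereas the paper argues by contradiction, lifting a hypothetical $P_4$ in $G'/\Pmax$ to an induced $P_4$ in $G'$ via representatives---which is exactly your isomorphism unwound.
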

\begin{proof}
  Since $\M(G)\subset \M(G')$ we can immediately conclude that $\Pmax$ is a
  (not necessarily maximal) modular partition of $G'$ and therefore the
  quotient $G'/\Pmax$ is well-defined.  Assume, for contradiction, that
  $G'/\Pmax$ is not a cograph.  Then $G'/\Pmax$ must contain one induced
  $P_4$, say $M_1-M_2-M_3-M_4$.  As $M_1,\ldots, M_4$ are modules of $G'$
  and since two disjoint modules are either adjacent or non-adjacent it
  follows that $G'$ must contain an induced $P_4$ of the form
  $x_1-x_2-x_3-x_4$ with $x_i\in M_i$, $1\leq i\leq 4$, a contradiction.
    
\end{proof}

\subsection{The Twin-Relation}	

A special kind of module that will play a central role in this contribution
are \emph{twins}. Two vertices $x,y\in V$ are called twins if $\{x,y\}$ is
a module of $G$. Twins $x,y\in V$ are called \emph{true twins} if $xy\in E$
and \emph{false twins} otherwise. Twins $x$ and $y$ therefore satisfy
$N(x)\setminus\{y\} = N(y)\setminus\{x\}$. In particular, for true twins
$\{x,y\}$ we can infer that $N[x]=N[y]$ and for false twins $\{x,y\}$ we
only have $N(x) = N(y)$. 

\begin{definition}
  Let $G = (V,E)$ be an arbitrary graph.  The \emph{twin relation} $\mc T$
  is the binary relation on $V$ that contains all pairs of twins: $(x,y)\in
  \mc T$ if and only if $x,y$ are twins in $G$.  The pair $(x,x)\in \mc T$
  is called \emph{trivial twin}.  
\end{definition}
Unless explicitly stated, we will use the phrase ``a pair of twins'' or 
``twins'', for short, to refer only to non-trivial twins.

\begin{proposition}		
  Let $G=(V,E)$ be a given graph and  $\mc T$ the twin relation on $V$.
  Then the following statements hold:
  \begin{enumerate}
  \item The relation $\mc T$ is an equivalence relation. 
  \item For every equivalence class $M\eqcl \mc T$ 
    the distinct elements of $M$ are either all true or false twins. 
  \item Every equivalence class $M\eqcl T$ is a module of $G$
    and there is no other non-trivial strong module contained in $M$. 
  \end{enumerate}
  \label{prop:twin}
\end{proposition}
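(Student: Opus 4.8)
The plan is to prove the three claims in the stated order, each feeding into the next. For (1), reflexivity is built into the definition (the trivial twins $(x,x)$) and symmetry follows from $\{x,y\}=\{y,x\}$; the only real work is transitivity. Given $(x,y),(y,z)\in\mc T$ I may assume $x,y,z$ pairwise distinct, the remaining cases being immediate, and I would split on the two pair‑types, using that a true‑twin pair $\{a,b\}$ has $N[a]=N[b]$ and a false‑twin pair has $N(a)=N(b)$. If $\{x,y\}$ and $\{y,z\}$ are both true‑twin (resp.\ both false‑twin) pairs, then $N[x]=N[y]=N[z]$ (resp.\ $N(x)=N(y)=N(z)$), and hence $\{x,z\}$ is again a twin pair, of the same type.

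The step I expect to be the main obstacle is showing that the two ``mixed'' configurations cannot occur, e.g.\ $xy\in E$ while $yz\notin E$. In that situation $N[x]=N[y]$ and $N(y)=N(z)$; from $xy\in E$ we get $x\in N(y)=N(z)$, so $xz\in E$, so $z\in N(x)\subseteq N[x]=N[y]$, so $z\in N(y)$ (as $z\ne y$), contradicting $yz\notin E$; the other mixed configuration is excluded symmetrically. This finishes (1), and it also proves (2) at once: inside an equivalence class $M$ all pairs are related, so if $M$ contained both a true‑twin pair and a false‑twin pair, then either they share a vertex --- immediately contradicting the mixed‑case impossibility --- or they are disjoint, in which case joining a vertex of one to a vertex of the other by a third pair again produces a forbidden mixed configuration.

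For the first part of (3): if $|M|=1$ then $M$ is a trivial module, and otherwise, for any $x,y\in M$ we have $N(x)\setminus\{y\}=N(y)\setminus\{x\}$ because $(x,y)\in\mc T$; subtracting $M$ from both sides and using $x,y\in M$ gives $N(x)\setminus M=N(y)\setminus M$, so $M$ is a module of $G$. For the second part, suppose toward a contradiction that $M'$ is a non‑trivial strong module of $G$ with $M'\subsetneq M$. Pick $z\in M\setminus M'$ and distinct $x,y\in M'$. Since $x,z\in M$, the pair $\{x,z\}$ is a module of $G$, and it overlaps $M'$: indeed $x\in M'\cap\{x,z\}$, $z\in\{x,z\}\setminus M'$, and $y\in M'\setminus\{x,z\}$ (here $y\ne x$ and $y\notin\{x,z\}$ since $z\notin M'$). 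This contradicts $M'$ being strong, so no non‑trivial strong module is properly contained in $M$, which is the assertion.
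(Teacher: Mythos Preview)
Your proof is correct and follows essentially the same approach as the paper's: rule out the mixed true/false configuration to obtain transitivity and (2), then for (3) observe that $\{x,z\}$ is a module overlapping any putative non-trivial strong $M'\subsetneq M$. Your verification that $M$ itself is a module is in fact slightly cleaner than the paper's --- you treat both twin types uniformly via $N(x)\setminus\{y\}=N(y)\setminus\{x\}$ and then subtract $M$, whereas the paper splits into the false- and true-twin cases and uses an induction on $|M|$ in the latter.
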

\begin{proof}
  \emph{We first prove Statement 1. and 2.:} Clearly, $\mc T$ is reflexive
  and symmetric. It remains to show that $\mc T$ is transitive.  Assume
  that $(x,y),(x,z) \in \mc T$. We show first that $x,y$ and $x,z$ can only
  be either true or false twins. Assume for contradiction that $x,y$ are
  false twins and $x,z$ are true twins. Hence, $(x,z)\in E(G)$ and since
  $N(x)=N(y)$ we have $(y,z)\in E(G)$. However, since $y\in N[z] = N[x]$ we
  have $y\in N(x)$ and hence $(x,y)\in E(G)$, a contradiction.
		
  Let $(x,y),(x,z) \in \mc T$ be both false twins. Thus, $N(z)= N(x) =
  N(y)$, which implies that $(y,z) \in \mc T$. Moreover, since $N(z)=
  N(y)$, the vertices $y$ and $z$ cannot be adjacent and thus, $x$ and $z$
  are false twins. Now assume that $(x,y),(x,z) \in \mc T$ are both true
  twins. Hence, $N[z] = N[x] = N[y]$ and, thus, $(y,z) \in
  E(G)$. Therefore, $(y,z)\in \mc T$ are true twins.

  Thus, $T$ is transitive and hence, an equivalence relation, where each
  equivalence class $M\eqcl \mc T$ comprises either only false or only true
  twins.

  \emph{Now, we prove Statement 3.:} Statement 1.\ and 2.\ imply that
  $M\eqcl \mc T$ contains either only true or false twins.  If $M$ contains
  only false twins, then $N(x)\setminus M = N(x) = N(y) = N(y)\setminus M$
  for all $x,y\in M$ and thus, $M$ is a module. If $M$ contains only true
  twins $x,y$, then $N(x)\setminus \{y\} = N(y)\setminus \{x\}$. If there
  is an additional vertex $z\in M$ we must have $N(x)\setminus \{y,z\}
  =N(y)\setminus \{x,z\}$. Induction on the number of elements of $M$ shows
  that $N(x)\setminus M = N(y)\setminus M$ holds for all $x,y\in M\eqcl\mc
  T$. Therefore, $M$ is a module.

  Finally, assume there is a non-trivial strong module $M'$ contained in
  $M\eqcl \mc T$.  Let $x\in M'\subsetneq M$ and $z\in M\setminus
  M'$. Since $x,z\in M\eqcl\mc T$, they are twins. Thus, $\{x,z\}$ is a
  module. Hence, $M'$ cannot be strong module since $\{x,z\}\cap M'=\{x\}$
  and thus, $\{x,z\}$ and $M'$ overlap.  
    
\end{proof}

Note that equivalence classes of $\mc T$ are not necessarily strong
modules, as the following example shows. Consider the graph
$G=(\{0,1,2,3\},\{(2,3)\})$.  The twin relation $\mc T$ on $G$ has
equivalence classes $\{0,1\}$ and $\{2,3\}$.  However, $M =\{1,2,3\}$ is
also a module, as $N(i)\setminus M=\emptyset$, $1\leq i\leq 3$.  In this
case, $M$ overlaps $\{0,1\}$.  The modular decomposition $\MD(G)$ is
$\{\{0\}, \{1\},\{2\},\{3\}, \{2,3\}, \{0,1,2,3\}\}$, while $\M(G)\setminus
\MD(G) = \{\{0,2,3\}, \{1,2,3\}\}$.

\subsection{$\boldsymbol{P_4}$-sparse Graphs and Spiders}

Although the cograph-editing problem is NP-complete, it can be solved in
polynomial time for so-called \emph{$P_4$-sparse} graphs \cite{Liu:11,
  Liu:12}, in which every set of five vertices induces at most one $P_4$
\cite{Hoang:85}. The efficient recognition of $P_4$-sparse graphs is
intimately connected to so-called spider graphs, a very peculiar class of
prime graphs.

\begin{lemma}\cite{JO:89,Jamison:92}.
A graph $G$ is $P_4$ sparse if and only if exactly one of the following three
alternatives is true for every induced subgraph $H$ of $G$: (i) $H$ is not
connected, (ii) $\overline{H}$ is not connected, or (iii) $H$ is a spider.
\label{lem:p4sparse-spider}
\end{lemma}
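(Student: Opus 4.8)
The plan is to prove the two implications separately; the ``if'' part is a short counting argument, whereas the ``only if'' part carries all of the weight and is essentially the content of \cite{JO:89,Jamison:92}.

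For the ``if'' direction, assume that every induced subgraph of $G$ is disconnected, has disconnected complement, or is a spider, and suppose for contradiction that some five-element vertex set $W$ induces two distinct $P_4$'s; it suffices to inspect $H:=G[W]$. A four-element set induces a $P_4$ in $H$ precisely when it does so in $\overline H$, so the numbers of induced $P_4$'s in $H$ and $\overline H$ coincide and cases (i) and (ii) are symmetric. If $H$ is disconnected, a connected $P_4$ lies inside a single connected component; since $|W|=5$, at most one component has four vertices and hence carries at most one $P_4$, while smaller components carry none, contradicting that $H$ has two. The same argument applied to $\overline H$ disposes of case (ii). If $H$ is a spider on five vertices then $|S|=|C|=2$ and $|R|=1$, and a direct inspection of the thin and the thick five-vertex spider shows that each has exactly one induced $P_4$ -- a contradiction again. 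Hence $G$ is $P_4$-sparse.

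For the ``only if'' direction, let $G$ be $P_4$-sparse and $H$ an induced subgraph; $H$ is again $P_4$-sparse, and we must show that if both $H$ and $\overline H$ are connected then $H$ is a spider. In that case the root of $\MDT(H)$ carries the label ``prime'', so $Q:=H/\Pmax(H)$ is a \emph{prime} graph on at least four vertices, and $Q$ is $P_4$-sparse as well: an induced $P_4$ in $Q$ lifts to one in $H$ by choosing a representative from each of the four modules and recalling that disjoint modules are adjacent or non-adjacent (as in the proof of Lemma~\ref{lem:quotient-cograph}), so a five-module set inducing two $P_4$'s in $Q$ would give five vertices inducing two $P_4$'s in $H$. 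The structural core, due to \cite{JO:89,Jamison:92}, is that \emph{every prime $P_4$-sparse graph on at least four vertices is a spider} -- necessarily with $|R|\le 1$, as a head on two or more vertices would be a non-trivial module. I would prove this by analyzing how a vertex $v$ lying outside an induced spider $(S,C,R)$ of $Q$ can attach to it: since $v$ together with a distinguished induced $P_4$ of the spider may not form a second $P_4$, and $Q$ has no non-trivial module, $v$ is forced to be completely adjacent to $C$ and completely non-adjacent to $S$, so that $S\cup C\cup R\cup\{v\}$ is again a spider; as $Q$ contains an induced $P_4$ (every prime graph on at least four vertices does), a maximal induced spider in $Q$ must therefore be all of $Q$.

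It remains to pull the spider structure back to $H$. Write $\Pmax(H)=\{M_1,\dots,M_k\}$ and let $(S_Q,C_Q,R_Q)$ be the spider partition of $V(Q)$. Were some module $M_i$ lying in $S_Q\cup C_Q$ to contain two vertices $x\ne y$, then $x$ and $y$, together with one vertex of the module paired with $M_i$ under the spider bijection and one vertex from each part of a further leg--foot pair, would be five vertices of $H$ carrying two distinct induced $P_4$'s -- the vertices $x,y$ of a leg module are joined to the same single foot, and those of a foot module to the same single leg -- contradicting that $H$ is $P_4$-sparse. Hence every module in $S_Q\cup C_Q$ is a singleton, and one checks immediately that $H$ is a spider with legs $\bigcup_{M_i\in S_Q}M_i$, feet $\bigcup_{M_i\in C_Q}M_i$, and head the (arbitrary) subgraph induced by the single module in $R_Q$, which is empty when $R_Q=\emptyset$. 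The main obstacle is exactly the classification of prime $P_4$-sparse graphs as spiders: the attachment analysis sketched above is elementary but needs an extensive case distinction, which is where the arguments of Jamison and Olariu do the real work; the surrounding reductions -- heredity, the passage to the prime quotient, and the final lifting -- are routine.
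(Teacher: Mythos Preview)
The paper does not prove this lemma; it is quoted from \cite{JO:89,Jamison:92} without argument, so there is no paper proof to compare against. Your overall plan is the standard one and is sound: the ``if'' direction by inspecting the three possibilities for a five-vertex induced subgraph, and the ``only if'' direction by passing to the prime quotient $Q=H/\Pmax(H)$, invoking the Jamison--Olariu result that a prime $P_4$-sparse graph is a spider with $|R|\le 1$, and then lifting back by showing that every leg and body module of $Q$ is a singleton. The ``if'' computation and the lifting step are correct as you wrote them.

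There is, however, a genuine error in your sketch of the core step. You assert that any vertex $v$ outside an induced spider $(S,C,R)$ of $Q$ is forced, by $P_4$-sparseness and primality of $Q$, to be adjacent to all of $C$ and non-adjacent to all of $S$, and hence can be absorbed into the head. This fails already for the thin spider on six vertices with body $\{k_1,k_2,k_3\}$, legs $\{s_1,s_2,s_3\}$, and empty head: take the induced five-vertex spider with $C=\{k_1,k_2\}$, $S=\{s_1,s_2\}$, $R=\{k_3\}$, and let $v=s_3$. Then $s_3$ is adjacent only to $k_3$, hence non-adjacent to both elements of $C$, and yet $\{s_1,k_1,k_2,s_2,s_3\}$ carries exactly one induced $P_4$, so no $P_4$-sparseness constraint is violated. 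The correct extension argument must allow $v$ to enter the body or the legs as well (here: promote $k_3$ from head to body and pair it with the new leg $s_3$), not merely the head; greedily pushing every new vertex into the head would also contradict $|R|\le 1$ whenever $|V(Q)|\ge 6$. Since you explicitly defer the substance of this step to \cite{JO:89,Jamison:92} anyway, this does not derail your outline, but the sentence ``$v$ is forced to be completely adjacent to $C$ and completely non-adjacent to $S$'' is false as stated and should be replaced by the weaker claim that the spider can always be enlarged to include $v$, with the case analysis left to the cited references.
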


Spiders come in two sub-types, called \emph{thin} and \emph{thick}
\cite{Jamison:92,Nastos:12}. A graph $G$ is a \emph{thin spider} if its
vertex set can be partitioned into three sets $K$, $S$, and $R$ so that (i)
$K$ is a clique; (ii) $S$ is a stable set; (iii) $|K|=|S|\ge 2$; (iv) every
vertex in $R$ is adjacent to all vertices of $K$ and none of the vertices
of $S$; and (v) each vertex in $K$ is connected to exactly one vertex in
$S$ by an edge and \emph{vice versa}. A graph $G$ is a \emph{thick spider}
if its complement $\overline{G}$ is a thin spider.  The sets $K$, $S$, and
$R$ are usually referred to as the \emph{body}, the set of \emph{legs}, and
\emph{head}, resp., of a thin spider.  The path $P_4$ is the only graph
that is both a thin and thick spider. Interestingly, spider graphs are
fully characterized by its degree sequences \cite{BCF+15}.

Lemma \ref{lem:p4sparse-spider} in particular implies that any strong
module $M\in \MD(G)$ of a $P_4$-sparse graph is either (i) parallel (ii)
series or (iii) prime, in which case it corresponds to a spider $G[M]$.  In
general, we might therefore additionally distinguish prime modules $M$ as
those where $G[M]$ is a spider (called \emph{spider modules}) and those
where $G[M]$ is not a spider (which we still call simply \emph{prime
  modules}).

\subsection{Useful Properties of Modular Partitions}

First, we briefly summarize the relationship between cographs $G$ and the
modular decomposition $\MD(G)$.
\begin{theorem}[\cite{Corneil:81,BLS:99}]
Let $G=(V,E)$ be an arbitrary graph. Then the following statements are
equivalent. 
\begin{enumerate}
\item $G$ is a cograph.  
\item $G$ does not contain induced paths on four vertices $P_4$.
\item $\MDT(G)$ is the cotree of $G$ and hence, has no inner vertices
  labeled with ``prime''.
\item Any non-trivial induced subgraph of $G$ has at least one pair of
  twins $\{x,y\}$.
\label{thm:cograph-characterization}
\end{enumerate}
\label{thm:cograph-characterize}
\end{theorem}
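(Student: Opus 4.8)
The plan is to prove the implications $(1)\Rightarrow(2)$, $(2)\Rightarrow(3)$, $(3)\Rightarrow(1)$, $(1)\Rightarrow(4)$ and $(4)\Rightarrow(2)$; together these make the four statements mutually equivalent. Four of the five steps are short inductions, and essentially all the graph-theoretic content sits in a single classical fact invoked for $(2)\Rightarrow(3)$.

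For $(1)\Rightarrow(2)$ I would use two elementary consequences of the inductive definition of cographs: the class is closed under induced subgraphs, and every cograph on at least two vertices is disconnected or has a disconnected complement (peel complementations off the top of its construction until a disjoint union of two non-empty parts is exposed). Since $P_4$ has four vertices and is both connected and has a connected complement, no cograph can contain it as an induced subgraph. For $(2)\Rightarrow(3)$ I would argue contrapositively: if an inner vertex $v$ of $\MDT(G)$ is labelled ``prime'', then $G[L_v]$ is an induced subgraph of $G$ that is connected, has a connected complement, and has at least two vertices; by Seinsche's structure theorem \cite{Seinsche1974,BLS:99} (a $P_4$-free graph on at least two vertices is disconnected or has a disconnected complement) this is impossible unless $G[L_v]$, and hence $G$, contains an induced $P_4$. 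Consequently, if $G$ is $P_4$-free then $\MDT(G)$ has only ``parallel'' and ``series'' inner vertices, which is exactly the defining property of the cotree. For $(3)\Rightarrow(1)$ I would induct bottom-up over $\MDT(G)$: each leaf is the cograph $K_1$; if $v$ is ``parallel'', $G[L_v]$ is the disjoint union of the cographs $G[L_c]$ attached at the children $c$ of $v$; if $v$ is ``series'', $\overline{G[L_v]}$ is the disjoint union of the $\overline{G[L_c]}$, so $G[L_v]$ is the complement of a disjoint union of cographs; all three operations preserve the cograph property, and at the root we conclude that $G$ is a cograph.

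To fold in statement $(4)$: for $(1)\Rightarrow(4)$, given a non-trivial induced subgraph $H=G[W]$ with $|W|\ge 2$, note that $H$ is a cograph, hence $H$ or $\overline H$ is disconnected; as $x,y$ form a pair of twins in $H$ if and only if they do in $\overline H$, we may assume $H$ is disconnected with connected components $C_1,\dots,C_k$, $k\ge 2$. If some $|C_i|\ge 2$, then $G[C_i]$ is a smaller cograph and, by induction, has a pair of twins, which remain twins in $H$ since a connected component is a module of $H$; otherwise $H\simeq\overline{K_{|W|}}$ and any two of its vertices are false twins. For $(4)\Rightarrow(2)$ I would use the contrapositive: an induced $P_4$ on $a-b-c-d$ is a non-trivial induced subgraph in which none of the six vertex pairs is a pair of twins, as a direct inspection of the neighbourhoods shows; hence $(4)$ fails.

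I expect the only real obstacle to be the imported ingredient in $(2)\Rightarrow(3)$, namely Seinsche's theorem that every $P_4$-free graph with at least two vertices is disconnected or has a disconnected complement (equivalently, that a prime graph on at least four vertices contains an induced $P_4$); this is the genuine content behind the ``only if'' directions and I would simply cite it. If a self-contained argument is wanted, it follows from a classical induction on $|V|$ — a shortest path of length at least three between two non-adjacent vertices immediately exhibits an induced $P_4$, reducing to the case where $G$ and $\overline G$ both have diameter at most two, which is then disposed of by the usual vertex-deletion induction. Everything else in the proof is routine bookkeeping with connected components, complements, and the modular decomposition tree, drawing on the basic facts about $\Pmax$ and $\MDT$ recalled above (e.g.\ Lemma~\ref{lem:module-subg}).
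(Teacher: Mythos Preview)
The paper does not prove this theorem at all: it is stated with the attribution \cite{Corneil:81,BLS:99} and used as a black box, so there is no ``paper's own proof'' to compare against. Your proposal is therefore strictly more than what the paper supplies.

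That said, your argument is correct and follows the classical route found in the cited references. The cycle $(1)\Rightarrow(2)\Rightarrow(3)\Rightarrow(1)$ is standard, and you correctly identify that the only substantive step is Seinsche's theorem (a $P_4$-free graph on at least two vertices, or its complement, is disconnected), which you rightly propose to cite. The side loop through $(4)$ via $(1)\Rightarrow(4)$ and $(4)\Rightarrow(2)$ is also fine; the induction in $(1)\Rightarrow(4)$ is clean once one notes that a connected component is a module (so twins in a component stay twins in the whole graph), and the base case $|W|=2$ is immediate. One very minor remark: in $(2)\Rightarrow(3)$ you might add that a vertex labelled ``prime'' necessarily satisfies $|L_v|\ge 4$ (indeed the prime quotient $G[L_v]/\Pmax(G[L_v])$ has at least four vertices), so the hypothesis of Seinsche's theorem is met with room to spare; your phrasing ``at least two vertices'' is sufficient but underplays this.
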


For later explicit reference, we summarize in the next theorem several
results that we already implicitly referred to in the discussion above.

\begin{theorem}[\cite{GPP:10,HP:10,Moh:85}]
\label{thm:all}
The following statements are true for an arbitrary graph $G=(V,E)$:
\begin{itemize}
\item[(T1)]
  The maximal modular partition $\Pmax(G)$ and the modular decomposition
  $\MD(G)$ of $G$ are unique.
\item[(T2)] 
  Let $\P$ be a modular partition of $G$.  Then $\widetilde{\P} \subset \P$
  is a (non-trivial strong) module of $G/{\P}$ if and only if $\cup_{M\in
    \widetilde{\P}} M$ is a (non-trivial strong) module of $G$.
\item[(T3)] 
  Let $M$ be a module of $G$ and $\{a,b,c,d\}$ be four vertices inducing a
  $P_4$ in $G$, then $|M \cap \{a,b,c,d\}|\leq 1$ or $\{a,b,c,d\}\subseteq
  M$.
\item[(T4)] 
  For any connected graph $G$ with $\overline G$ being connected, the
  quotient $G/{\Pmax(G)}$ is a prime graph.
\item[(T5)] Let $\Pmax$ be the maximal modular partition of $G[M]$, where
  $M$ denotes a prime module of $G$ and $\P' \subsetneq \Pmax$ be a proper
  subset of $\Pmax$ with $|\P'|>1$.  Then, $\bigcup_{M'\in \P'}{M'} \notin
  \M(G)$.
\end{itemize}
\end{theorem}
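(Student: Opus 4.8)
The plan is to argue by contradiction: suppose $N:=\bigcup_{M'\in\P'}M'$ is a module of $G$ with $1<|\P'|<|\Pmax|$. The first step is to observe that since every $M'\in\P'\subseteq\Pmax=\Pmax(G[M])$ is in particular a module of $G[M]$, and by Lemma~\ref{lem:module-subg} a module of $G$, and since $N\subseteq M$, we get that $N$ is a module of $G$ contained in $M$; hence by Lemma~\ref{lem:module-subg} again, $N$ is a module of $G[M]$. The point of passing to $G[M]$ is that it lets us work entirely inside the induced subgraph associated with the prime module, where we can use (T4).

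Next I would translate the statement about $N$ into the quotient graph. By (T1) the partition $\Pmax$ is a modular partition of $G[M]$, so the quotient $H:=G[M]/\Pmax$ is well-defined, and by (T4) — because $M$ is a prime module, so $G[M]$ and $\overline{G[M]}$ are both connected — the graph $H$ is a prime graph. Now $\P'\subsetneq\Pmax$ is exactly a subset of the vertex set of $H$, and by (T2) the set $\bigcup_{M'\in\P'}M'=N$ being a module of $G[M]$ is equivalent to $\P'$ being a module of $H$. So $\P'$ is a module of the prime graph $H$ with $1<|\P'|<|V(H)|=|\Pmax|$, i.e.\ a nontrivial proper module. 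That contradicts primeness of $H$, which by definition has only trivial modules. This contradiction establishes (T5).

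The only genuinely delicate point is the correct application of (T2): one must check that $\P'$ here plays the role of $\widetilde{\P}$ and $\Pmax$ the role of $\P$, and that the ``module'' (not necessarily strong) version of the equivalence is the one being invoked — (T2) as stated gives both the plain and the strong version, and we only need the plain one. A small companion check is that $|\P'|>1$ and $|\P'|<|\Pmax|$ genuinely make $\P'$ a \emph{nontrivial} and \emph{proper} module of $H$: nontrivial because $|\P'|\ge 2$ rules out singletons (and $\P'\ne\emptyset$), and proper because $\P'\subsetneq\Pmax$ rules out all of $V(H)$. Everything else is bookkeeping with Lemma~\ref{lem:module-subg} to move modules between $G$ and $G[M]$; I do not anticipate any obstacle there, since $N\subseteq M$ and $M\in\MD(G)$ are exactly the hypotheses that lemma needs.
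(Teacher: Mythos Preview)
Your argument for (T5) is correct. Note, however, that the paper does not actually supply a proof of Theorem~\ref{thm:all}: the theorem is quoted from \cite{GPP:10,HP:10,Moh:85}, and the paragraph that follows it is explanatory commentary rather than a proof. So there is no ``paper's own proof'' to compare against.

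That said, your derivation is exactly the natural one and matches how the paper itself reasons about prime modules elsewhere (e.g.\ the remark just before Lemma~\ref{lem:p4sparse-spider} that $G[L_v]/\Pmax(G[L_v])$ is prime for any prime-labelled vertex $v$). The chain Lemma~\ref{lem:module-subg} $\Rightarrow$ (T4) $\Rightarrow$ (T2) is the right route, and your check that $1<|\P'|<|\Pmax|$ forces $\P'$ to be a non-trivial proper subset of $V(H)$ is the only thing that needs saying. One cosmetic point: your invocation of (T1) (``by (T1) the partition $\Pmax$ is a modular partition'') is slightly off --- (T1) asserts uniqueness, while the fact that $\Pmax$ is a modular partition is by definition; this does not affect the argument.
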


Statements (T1) and (T4) are clear.  Statement (T2) characterizes the
(non-trivial strong) module of $G$ in terms of (non-trivial strong) modules
$\widetilde{\P}$ of $G/{\P}$. Statement (T3) clarifies that each induced
$P_4$ is either entirely contained in a module or intersects a module in at
most one vertex.  Statement (T5) explains that none of the unions of
elements of a maximal modular partition of $G[M]$ are modules of
$G$. Hence, only the prime module $M$ itself and, by Lemma
\ref{lem:module-subg}, the elements $M'\in \Pmax$ are modules of $G$.
 
\section{Cograph Editing}

\subsection{Optimal Modul-Preserving Edit Sets}

Given an arbitrary graph we are interested in the following
optimization problem. 
\begin{problem}[Cograph Editing]
  Given a graph $G=(V,E)$. Find a set $F\subseteq {{V}\choose{2}}$ of
  minimum cardinality s.t.\ $G^*=(V,E\DELTA F)$ is a cograph.
\end{problem} 
We will simply call an edit sets of minimum cardinality an \emph{optimal
  edit set}.

The (decision version of the) cograph-editing problem is NP-complete
\cite{Liu:11,Liu:12}. Nevertheless, the cograph-editing problem is
fixed-parameter tractable (FPT) \cite{Protti:09}. Hence, for the
parametrized version of this problem, i.e., for a given graph $G = (V,E)$
and a parameter $k \geq 0$ find a set $F$ of at most $k$ edges and
non-edges so that $G^* = (V, E\DELTA F )$ is a cograph, there is an
algorithm with running time $O(6^k)$ \cite{Cai:96}. This FPT approach was
improved in \cite{Liu:11,Liu:12} to an $O(4.612^k + |V|^{4.5})$ time
algorithm. These results are of little use for practical applications,
because the constant $k$ can become quite large. However, they provide deep
insights into the structure of the class of $P_4$-sparse graphs that
slightly generalizes cographs.

In particular, the cograph-editing problem can be solved in polynomial time
whenever the input graph is $P_4$-sparse \cite{Liu:11,Liu:12}. The key
observation is that every strong \emph{prime} module $M$ of a $P_4$-sparse
graph $G$ is a spider module. The authors then proceed to show that it
suffices to edit a fixed number of (non)legs, i.e., only (non)edges $xy$
with $x\in K$ and $y\in S$ in $G[M]$, for all such spider modules to
eventually obtain an optimally edited cograph. The resulting algorithm to
optimally edit a $P_4$-sparse graph to a cograph, \texttt{EDP4}, runs in
$O(|V|+|E|)$-time.

In the following will frequently make use of a result by Guillemot \emph{et
  al.\ } \cite{GPP:10} that is based on the following
\begin{lemma}[\cite{GPP:10}]
  Let $G = (V,E)$ be an arbitrary graph and let $M$ be a non-trivial module
  of $G$. If $F_M$ is an optimal edge-edition set of the induced subgraph
  $G[M]$ and $\Fo$ is an optimal edge-edition set of $G$, then 
  (i) $F = (\Fo\setminus\Fo[M])\cup F_M$ is an optimal edge-edition set of
  $G$ and (ii) $\Fo[M]$ contains all (non-)edges $xy\in \Fo$ with $x,y\in M$.
  \label{lem:opt-modul-edit}
\end{lemma}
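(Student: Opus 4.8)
The plan is to prove both assertions together by exploiting the fact that editing inside a module $M$ and editing outside $M$ are essentially independent operations, because the module structure is ``local.'' Let me set up notation: write $F_{\mathrm{out}} := \Fo \setminus \Fo[M]$ for the part of an optimal edit set of $G$ that involves at least one vertex outside $M$, so that $\Fo = F_{\mathrm{out}} \mathbin{\dot\cup} \Fo[M]$ once we have established (ii). I will first argue (ii), then (i).

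\textbf{Step 1 (establishing (ii)).} Let $G^* = (V, E\DELTA \Fo)$ be the cograph obtained from the optimal edit set. I claim $M$ remains a module of $G^*$, or at least that we may assume so. The cleanest route is to show that $F' := F_{\mathrm{out}} \cup F_M$ (for \emph{any} optimal $F_M$ of $G[M]$) produces a cograph of size $|F_{\mathrm{out}}| + |F_M| \le |\Fo|$; if we can do that, then minimality of $\Fo$ forces $|F_M| \ge |\Fo[M]|$, and since $\Fo[M]$ is itself \emph{some} edit set making $G[M]$ into a cograph (being the restriction of $\Fo$ to $M$, and $G^*[M]$ is an induced subgraph of a cograph hence a cograph), optimality of $F_M$ gives $|F_M| \le |\Fo[M]|$; hence equality and $\Fo[M]$ is optimal for $G[M]$. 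Actually the precise content of (ii) is just that \emph{all} pairs $xy \in \Fo$ with $x,y\in M$ lie in $\Fo[M]$ — which is true by the very definition of $\Fo[M] := \Fo \cap \binom{M}{2}$. So (ii) is a triviality of notation; the real work is to check that this restriction is itself optimal, which is what I just sketched, and this is what makes part (i) meaningful.

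\textbf{Step 2 (the module-surgery lemma, the crux).} The heart of the argument is: if $H = (V, E_H)$ is a cograph and $N \subseteq V$ is such that $H[N]$ can be replaced by any cograph $H'[N]$ on the same vertex set while keeping the adjacencies between $N$ and $V\setminus N$ untouched, then the result is again a cograph \emph{provided $N$ was a module of $H$}. This is exactly the kind of statement that Lemma~\ref{lem:quotient-cograph} (and its proof via $P_4$'s) is built to handle: if the modified graph contained an induced $P_4$ on $\{a,b,c,d\}$, then by (T3) either $|\{a,b,c,d\}\cap N|\le 1$ — but then the $P_4$ is untouched by the surgery and already present in $H$, contradiction — or $\{a,b,c,d\}\subseteq N$, in which case the $P_4$ lies inside $H'[N]$, contradicting that $H'[N]$ is a cograph. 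The one thing I must verify carefully is that $N$ is still a module \emph{after} the surgery: but changing only edges within $N$ does not change any $N^{N}$-neighborhood, so $N$ remains a module of the modified graph, which is what licenses the application of (T3). I expect this module-stays-a-module-under-internal-editing observation to be the main obstacle only in the sense of needing to be stated cleanly — it is not deep, but it is the linchpin.

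\textbf{Step 3 (assembling (i)).} Apply Step 2 with $H = G^*$, $N = M$, and $H'[N] = G[M]\,\DELTA\, F_M$: since $\Fo$ restricted to $M$ was shown optimal for $G[M]$, the graph $G^*[M] = G[M]\DELTA\Fo[M]$ is a cograph and $M$ is a module of $G^*$; replacing $G^*[M]$ by the cograph $G[M]\DELTA F_M$ yields a cograph, and this cograph is precisely $(V, E \DELTA ((\Fo\setminus\Fo[M])\cup F_M)) = (V, E\DELTA F)$. Its edit size from $G$ is $|F_{\mathrm{out}}| + |F_M| = (|\Fo| - |\Fo[M]|) + |F_M| = |\Fo|$ because $|F_M| = |\Fo[M]|$ by the optimality established in Step 1; hence $F$ is an optimal edit set of $G$, which is (i). One small bookkeeping point: $F_{\mathrm{out}}$ and $F_M$ are disjoint (the former involves a vertex outside $M$, the latter does not), so the cardinalities add without overlap, and the symmetric differences compose correctly. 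This completes the proof.
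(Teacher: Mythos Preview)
The paper does not supply its own proof of this lemma; it is quoted from \cite{GPP:10}, so there is no in-paper argument to compare against. Evaluating your argument on its own merits, there is a genuine gap.

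Your Step~2 surgery lemma is fine: if $N$ is a module of a cograph $H$, then replacing $H[N]$ by any cograph on the same vertex set yields a cograph, via (T3). The problem is Step~3, where you invoke it with $H=G^*$ and $N=M$: this requires $M$ to be a module of $G^*=G\DELTA\Fo$, which you assert but never prove. The clause ``and $M$ is a module of $G^*$'' in Step~3 is a non sequitur---it does not follow from $\Fo[M]$ being optimal for $G[M]$, which concerns only edges inside $M$---and in Step~1 you explicitly hedge (``or at least that we may assume so'') without ever returning to justify it. You cannot simply assume it: the lemma is stated for an \emph{arbitrary} optimal $\Fo$, whereas Proposition~\ref{prop:FM} only asserts the existence of \emph{some} module-preserving optimal edit set, not that every optimal edit set preserves $M$. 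If $M$ fails to be a module of $G^*$ (equivalently of $G\DELTA F$, since the two graphs share all $M$-to-outside edges), then (T3) is unavailable and a hypothetical $P_4$ in $G\DELTA F$ meeting $M$ in exactly two or three vertices is not excluded by either branch of your dichotomy. Your Step~1 plan is moreover circular: you propose to show that $F'=F_{\mathrm{out}}\cup F_M$ yields a cograph in order to deduce $|\Fo[M]|=|F_M|$, but showing that $F'$ yields a cograph is precisely part~(i). Your reading of~(ii) as definitional is correct.
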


\begin{proposition}[\cite{GPP:10}]
  Every graph $G(V,E)$ has an optimal edit set $\Fo$ such that every module
  $M$ of $G$ is module of the cograph $\Go = (V,E \DELTA \Fo)$.
\label{prop:FM}
\end{proposition}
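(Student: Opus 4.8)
The plan is an induction on $|V|$, peeling off the maximal modular partition $\Pmax(G)=\{M_1,\dots,M_k\}$ and handling the interiors of the children and the quotient separately. For $|V|=1$ nothing is to be shown. In the inductive step the hypothesis applied to each $G[M_i]$ yields an optimal edit set $F_i$ of $G[M_i]$ preserving every module of $G[M_i]$ (with $F_i=\emptyset$ if $|M_i|=1$). Every candidate edit set of $G$ will have the shape $\Fo=\widehat{F^q}\cup\bigcup_i F_i$, where $F^q$ is an edit set of the quotient $H:=G/\Pmax$ and $\widehat{F^q}:=\{xy\mid x\in M_i,\ y\in M_j,\ \{M_i,M_j\}\in F^q\}$ is its ``blow-up''. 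For any such $\Fo$, each child $M_i$ is a module of $\Go:=G\DELTA\Fo$ (its outside-neighbourhood in $\Go$ equals $N^{M_i}_G\DELTA\bigcup_{\{M_i,M_j\}\in F^q}M_j$, a union of some $M_j$, the same for all vertices of $M_i$), so $\Pmax$ is a modular partition of $\Go$ with $\Go/\Pmax\simeq H\DELTA F^q$, and $\Go[M_i]=G[M_i]\DELTA F_i$ is a cograph. Arguing as in Lemma~\ref{lem:quotient-cograph}, $\Go$ is then a cograph whenever $H\DELTA F^q$ is: an induced $P_4$ of $\Go$ cannot lie inside a single $M_i$, and if it meets two different children then, since modules are joined in all-or-nothing fashion, it must be transversal and hence projects to an induced $P_4$ of $\Go/\Pmax$.

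It remains to pick $F^q$ so that $\Fo$ is optimal. Fix any optimal edit set $F'$ of $G$, put $G':=G\DELTA F'$, and let $F^{\times}\subseteq F'$ collect the edited pairs joining two distinct children. Since $G'[M_i]$ is an induced subgraph of the cograph $G'$, Lemma~\ref{lem:opt-modul-edit} gives that $F'[M_i]$ is an optimal edit set of $G[M_i]$, so $|F'[M_i]|=\opt(G[M_i])=|F_i|$ and hence $|F'|=|F^{\times}|+\sum_i\opt(G[M_i])$. If $G$ or $\overline G$ is disconnected, then $H$ is edgeless, respectively complete, already a cograph, and I set $F^q:=\emptyset$. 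If $G$ and $\overline G$ are both connected, then by (T4) $H$ is prime, hence not a cograph, so $F^q=\emptyset$ will not do; this is the heart of the argument, which I would settle by averaging over ``representative tuples''. For $r=(r_1,\dots,r_k)\in M_1\times\dots\times M_k$ the graph $G'[\{r_1,\dots,r_k\}]$ is a cograph, and the edit set $F^q(r)$ of $H$ that realizes it --- i.e.\ $\{M_i,M_j\}\in F^q(r)$ iff $M_i$ and $M_j$ are adjacent in $G$ exactly when $r_i,r_j$ are non-adjacent in $G'$ --- is a valid cograph edit set of $H$. Counting: a pair $(r_i,r_j)$ puts $\{M_i,M_j\}$ into $F^q(r)$ precisely when $r_ir_j\in F^{\times}$, so over the $\prod_\ell|M_\ell|$ tuples the set $F^q(r)$ contains $\{M_i,M_j\}$ for exactly $|F^{\times}\cap(M_i\times M_j)|\cdot\prod_{\ell\neq i,j}|M_\ell|$ of them; hence the average of $w(F^q(r)):=\sum_{\{M_i,M_j\}\in F^q(r)}|M_i|\,|M_j|$ over all $r$ is exactly $|F^{\times}|$. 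Thus some tuple $r^{\ast}$ satisfies $w(F^q(r^{\ast}))\le|F^{\times}|$, and I take $F^q:=F^q(r^{\ast})$.

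With this choice $|\Fo|=w(F^q)+\sum_i|F_i|\le|F^{\times}|+\sum_i\opt(G[M_i])=|F'|=\opt(G)$, and since $\Go$ is a cograph ($H\DELTA F^q$ being one), $\Fo$ is an optimal edit set. Finally I would verify that every module $M$ of $G$ is a module of $\Go$. If $M\subseteq M_i$ for some $i$, then $M$ is a module of $G[M_i]$ by Lemma~\ref{lem:module-subg}, hence of $\Go[M_i]$ by the choice of $F_i$, hence of $\Go$ since $M_i$ is a module of $\Go$. Otherwise each strong module $M_i$ fails to overlap $M$, so $M=\bigcup_{i\in I}M_i$ with $|I|\ge2$, and by (T2) the set $\{M_i\mid i\in I\}$ is a module of $H$. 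If $G$ or $\overline G$ is disconnected, then $\Go/\Pmax=H$ is edgeless or complete, so every subset of its vertices is a module, and (T2) applied to $\Go$ returns that $M$ is a module of $\Go$; if $G$ and $\overline G$ are both connected, then the only non-trivial module of the prime graph $H$ is its full vertex set, forcing $M=V$, which is trivially a module of $\Go$.

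The averaging step in the prime case is the point I expect to be the real obstacle: it is the only place where one must genuinely produce an optimal edit set whose between-module edits are compatible with the modular structure, and naive ``majority rounding'' of $G'$ to an all-or-nothing quotient can strictly increase the cost --- it is the counting identity forcing the \emph{average} weight to equal $|F^{\times}|$ exactly, rather than merely to be bounded, that makes the argument go through.
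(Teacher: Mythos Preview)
The paper does not supply its own proof of Proposition~\ref{prop:FM}; it is quoted verbatim from \cite{GPP:10} without argument. So there is no in-paper proof to compare against, and I can only assess your argument on its own merits.

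Your proof is correct. The inductive scheme via $\Pmax(G)$ together with Lemma~\ref{lem:opt-modul-edit} cleanly reduces the problem to finding a quotient edit set $F^q$ whose blow-up has size at most $|F^{\times}|$, and your averaging over representative tuples $r\in M_1\times\dots\times M_k$ is exactly the right device: the counting identity
\[
\frac{1}{\prod_\ell|M_\ell|}\sum_r w\bigl(F^q(r)\bigr)
=\sum_{i<j}\frac{|M_i|\,|M_j|\cdot|F^{\times}\cap(M_i\times M_j)|\cdot\prod_{\ell\neq i,j}|M_\ell|}{\prod_\ell|M_\ell|}
=\sum_{i<j}|F^{\times}\cap(M_i\times M_j)|=|F^{\times}|
\]
is computed correctly, and the existence of a good $r^{\ast}$ follows. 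Your remark that majority rounding can fail while the averaging argument cannot is apt. The verification that every module of $G$ survives in $\Go$ is also complete: Lemma~\ref{lem:module-subg} (both directions) handles modules inside a child, and for modules spanning several children you correctly invoke (T2) together with the dichotomy that $H$ is either edgeless/complete (so every vertex subset is a module) or prime (forcing $M=V$).

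One minor point of presentation: when you cite Lemma~\ref{lem:opt-modul-edit} to get $|F'[M_i]|=\opt(G[M_i])$, note that the lemma as stated gives optimality of the \emph{replaced} set $F$; the conclusion $|F'[M_i]|=|F_{M_i}|$ then follows from $|F|=|F'|$ and the disjointness of $F'\setminus F'[M_i]$ and $F_{M_i}$. You use this implicitly, but it may be worth a half-line.
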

An edit set as described in Prop.~\ref{prop:FM} is called
\emph{module-preserving}. Their importance lies in the fact that
module-preserving edit sets update either all or none of the edges between
any two disjoint modules. 

In the following Remark we collect a few simple consequences of our
considerations so far.
\begin{remark}
  By Theorem \ref{thm:all} (T3) and definition of cographs, all induced
  $P_4$'s of a graph are entirely contained in the prime modules.  By Lemma
  \ref{lem:module-subg}, the maximal modular partition $\Pmax$ of $G[M]$ is
  a subset of the strong modules $\MD(G)\subseteq \M(G)$, for all strong
  modules $M\in \MD(G)$.  Taken together with Lemma
  \ref{lem:opt-modul-edit}, Proposition \ref{prop:FM} and Theorem
  \ref{thm:all} $(T1)$, this implies that it suffices to solve the
  cograph-editing problem for $G$ independently on each of $G$'s strong
  prime modules \cite{GPP:10}.

An optimal module-preserving edit-set $\Fo$ on $G$ therefore induces optimal
edit-sets $F_M$ on $G[M]$ for any $M$, and thus also optimal edit-set
$\Fo(M,\Pmax)$ on $G[M]/{\Pmax}$, where $\Pmax = \{M_1,\dots,M_k\}$ is again
the maximal modular partition of $G[M]$ and $M$ is a module of $\M(G)$. The
edit set $F(M,\Pmax)$ has the following explicit representation:
\[
F(M,\Pmax):=\{\{M_i,M_j\}\mid M_i,M_j\in\Pmax \exists x\in M_i, y\in M_j 
\textrm{\ with\ } \{x,y\}\in \Fo[M]\}.
\]
\label{rem:all}
\end{remark}

\subsection{Optimal Module Merge Deletes All $P_4$'s}

Since cographs are characterized by the absence of induced $P_4$'s, we can
interpret every cograph-editing method as the removal of all $P_4$'s in the
input graph with a minimum number of edits. A natural strategy is therefore
to detect $P_4$'s and then to decide which ones must be edited. Optimal
edit sets are not necessarily unique. A further difficulty is that editing
an edge of a $P_4$ can produce new $P_4$'s in the updated graph. Hence we
cannot expect \emph{a priori} that local properties of $G$ alone will allow
us to identify optimal edits.

By Remark \ref{rem:all}, on the other hand, it is sufficient to edit within
the prime modules. We therefore focus on the maximal modular partition
$\Pmax=\Pmax(G[M])$ of $G[M]$, where $M\in \MD(G)$ is a strong prime module
of $G$. Since $G[M]/\Pmax$ is prime, it does not contain any twins.  Now
suppose we have edited $G$ to a cograph $\Go$ using an optimally
module-preserving edit set. Then $\Go[M]$ is a cograph and by Lemma
\ref{lem:quotient-cograph} the quotient $\Go[M]/\Pmax$ is also a cograph.
Therefore, $\Go[M]/\Pmax$ contains at least one pair of twins
$\{M_i,M_j\}$, where $M_i$ and $M_j$ are, by construction, children of the
prime module $M\in \MD(G)$.

This consideration suggests that it might suffice to edit the \out{M_i}-
and \out{M_j}-neighborhoods in $G$ in such a way that $M_i$ and $M_j$
become twins in an optimally edited cograph $\Go$. In the following we will
show that this is indeed the case.

We first show that twins are ``safe'', i.e., that we never have to edit
edges within a subgraph $G[M]$ induced by an equivalence classes $M$ of the
twin relation $\T$.

\begin{lemma}
  Let $G=(V,E)$ be a non-cograph, $F$ be an arbitrary cograph edit set s.t.\ 
  $G'=(V,E\DELTA F)$ is the resulting cograph and suppose that 
  $G'\DELTA e$ is a non-cograph for all $e\in F$.
  Then $\{x,y\}\notin F$ for twins $x,y$ in $G'$.
  \label{lem:T-notEdit}
\end{lemma}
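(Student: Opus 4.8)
The plan is to argue by contradiction: suppose $x,y$ are twins in the cograph $G'=(V,E\DELTA F)$, yet some edit $e=\{x,y\}\in F$ lies between them (note $e$ must have both endpoints in the twin class, since an edit with only one endpoint in $\{x,y\}$ would not be of the form $\{x,y\}$). I want to show $G'\DELTA e$ is still a cograph, contradicting the minimality hypothesis that $G'\DELTA e$ is a non-cograph for every $e\in F$. The key observation is that flipping a single edge \emph{between two twins} cannot create an induced $P_4$.

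First I would recall that in $G'$, since $x$ and $y$ are twins, $N_{G'}(x)\setminus\{y\}=N_{G'}(y)\setminus\{x\}$; write $W$ for this common outside-neighborhood. Now let $G''=G'\DELTA e$, so $G''$ differs from $G'$ only in the adjacency status of the pair $\{x,y\}$; crucially $x$ and $y$ are \emph{still} twins in $G''$, since the condition $N_{G''}(x)\setminus\{y\}=N_{G''}(y)\setminus\{x\}=W$ is unaffected by toggling $xy$. So $\{x,y\}$ is a module of $G''$. Then I would suppose for contradiction that $G''$ contains an induced $P_4$ on vertices $\{a,b,c,d\}$. Since $\{x,y\}$ is a module of $G''$, Theorem~\ref{thm:all}~(T3) forces $|\{x,y\}\cap\{a,b,c,d\}|\le 1$ — because the $P_4$ cannot be entirely contained in the two-element set $\{x,y\}$. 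Hence at most one of $x,y$ appears in the $P_4$. But any induced $P_4$ using at most one of $x,y$ involves only vertices whose mutual adjacencies are identical in $G''$ and in $G'$ (the only edit separating the two graphs is the pair $\{x,y\}$ itself, which is not an edge of this $P_4$). Therefore the same four vertices induce a $P_4$ in $G'$ as well, contradicting that $G'$ is a cograph.

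This shows $G''=G'\DELTA e$ is $P_4$-free, i.e., a cograph, contradicting the hypothesis that $G'\DELTA e$ is a non-cograph. Hence no such $e=\{x,y\}\in F$ exists, so $\{x,y\}\notin F$ for every pair of twins $x,y$ in $G'$, as claimed. I expect the main (minor) obstacle to be the bookkeeping around the definition of an edit ``$e\in F$'': one must be careful that $F\subseteq\binom{V}{2}$ consists of pairs, that $e=\{x,y\}$ really is the only candidate edit lying ``within'' the twin class $\{x,y\}$, and that toggling this one pair genuinely preserves the twin relationship — but all of this is immediate once one writes out the neighborhood equalities. The conceptual heart is simply the application of (T3) to the module $\{x,y\}$ of $G''$.
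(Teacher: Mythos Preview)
Your proof is correct and takes a genuinely different route from the paper. The paper argues by a direct case split on whether $x,y$ are true or false twins in $G'$: in each case it enumerates (up to symmetry) the possible shapes of an induced $P_4$ in $G'\DELTA\{x,y\}$ that involves both $x$ and $y$, and then uses the twin condition $N_{G'}(x)\setminus\{y\}=N_{G'}(y)\setminus\{x\}$ to exhibit an extra edge that destroys the putative $P_4$. Your argument short-circuits this enumeration with one structural observation: toggling the pair $\{x,y\}$ does not touch $N(x)\setminus\{y\}$ or $N(y)\setminus\{x\}$, so $\{x,y\}$ remains a module in $G'' = G'\DELTA\{x,y\}$; then (T3) forces any induced $P_4$ in $G''$ to meet $\{x,y\}$ in at most one vertex, whence the $P_4$ does not use the toggled pair and was already present in the cograph $G'$, a contradiction. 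Your approach is cleaner and more conceptual --- it treats the true/false cases uniformly and avoids all configuration checking --- while the paper's hands-on case analysis has the minor expository benefit of not relying on the reader recalling (T3).
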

\begin{proof}
  Since $G'$ is a cograph it contains at least one pair of twins
  $x,y$. First assume that $x$ and $y$ are false twins and thus, $xy\not\in
  E(G')$. Assume, for contradiction, that $xy\in F$. By assumption, $G'+xy$
  is not a cograph and thus there is an induced $P_4$ containing the edge
  $xy$ in $G'+xy$. All such $P_4$'s that contain the edge $xy$ are (up to
  symmetries and isomorphism) of the form (i) $a-x-y-b$ or (ii)
  $x-y-b-a$. Since $x$ and $y$ are false twins in $G'$, we have
  $N_{G'}(x)=N_{G'}(y)$ and hence, there must be an edge $xb\in E(G')$ and
  therefore, $xb\in E(G'+xy)$. But this implies that $G'+xy$ is still
  cograph, the desired contradiction.
	
  Now suppose that $x$ and $y$ are true twins, i.e., $xy\in E(G')$. 
  Assume, for contradiction, that $xy\in F$ and thus $G'-xy$ is not a 
  cograph. Hence there must be an induced $P_4$ containing $x$ and $y$ 
  in $G'-xy$. All such $P_4$'s containing $x$ and $y$ are 
  (up to symmetries and isomorphism) of the form (i) $x-a-y-b$ or
  (ii) $x-a-b-y$. Since $x$ and $y$ are true twins in $G'$, we have
  $N_{G'}(x)\setminus \{y\}=N_{G'}(y)\setminus \{x\}$. This implies that
  there is the the edge $xb \in E(G' - xy)$ in both case (i) and (ii). 
  Therefore $G'-xy$ is a cograph, a contradiction. 
    
\end{proof}

\begin{theorem}
  Let $G=(V,E)$ be an arbitrary graph, $F_{\opt}$ be an optimal cograph 
  edit set for $G$, $\Go=(V,E\DELTA F_{\opt})$ the resulting cograph and 
  $\T$ be the twin relation on $\Go$. 
  Then for each equivalence class $M\eqcl \T$ 
  it holds that:
  \begin{enumerate}
  \item[(i)] $M$ is a module of $\Go$ that does not contain any other 
    non-trivial strong module of $\M(\Go)$. 
  \item[(ii)] $\Go[M]$ induces either an independent set $\overline{K_{|M|}}$ 
    or a complete graph $K_{|M|}$. 
  \item[(iii)] For all $x,y \in M$ it holds that $\{x,y\}\notin F_{\opt}$	
    and thus, $G[M] \simeq \Go[M]$.
  \end{enumerate}
  \label{thm:twins}
\end{theorem}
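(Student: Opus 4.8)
\emph{Proof idea.} The plan is to obtain (i) and (ii) as purely structural facts about twin classes in the cograph $\Go$, invoking only Proposition~\ref{prop:twin}, and then to derive (iii) by combining the irredundancy of an optimal edit set with Lemma~\ref{lem:T-notEdit}. For (i) and (ii) I would simply apply Proposition~\ref{prop:twin} to the graph $\Go$ (which is a perfectly valid ``given graph''): its Statement~3 says that every equivalence class $M\eqcl\T$ is a module of $\Go$ and contains no further non-trivial strong module of $\M(\Go)$, which is exactly (i); and its Statement~2 says that the distinct elements of $M$ are either all true twins or all false twins, so either every pair in $M$ is an edge of $\Go$, giving $\Go[M]\simeq K_{|M|}$, or every pair is a non-edge, giving $\Go[M]\simeq\overline{K_{|M|}}$, which is (ii). Neither of these uses optimality of $F_{\opt}$.

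The content lies in (iii). First I would dispose of the trivial case: if $G$ is already a cograph then the unique optimal edit set is $F_{\opt}=\emptyset$, so $\Go=G$ and there is nothing to prove; hence assume $G$ is a non-cograph. The key observation is that an optimal edit set is \emph{irredundant}: for any $e\in F_{\opt}$ we have $\Go\DELTA e=(V,(E\DELTA F_{\opt})\DELTA\{e\})=(V,E\DELTA(F_{\opt}\setminus\{e\}))$, so if $\Go\DELTA e$ were a cograph then $F_{\opt}\setminus\{e\}$ would be a cograph edit set for $G$ of strictly smaller cardinality, contradicting minimality of $F_{\opt}$. Thus $\Go\DELTA e$ is a non-cograph for every $e\in F_{\opt}$, which is precisely the hypothesis of Lemma~\ref{lem:T-notEdit} applied with $F=F_{\opt}$ and $G'=\Go$. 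That lemma then yields $\{x,y\}\notin F_{\opt}$ for every pair of twins $x,y$ of $\Go$; since any two distinct vertices $x,y$ in one equivalence class $M\eqcl\T$ are twins in $\Go$ by definition of $\T$, we conclude $\{x,y\}\notin F_{\opt}$ for all $x,y\in M$.

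It then remains to note $G[M]\simeq\Go[M]$: for $x,y\in M$ we have $xy\in E(\Go)$ iff $xy\in E\DELTA F_{\opt}$, and since $\{x,y\}\notin F_{\opt}$ this holds iff $xy\in E$; hence $E(G[M])=E(\Go[M])$ and the two induced subgraphs coincide, in particular are isomorphic. I do not expect a genuine obstacle here; the only point that needs care is the easy-to-overlook step that optimality of $F_{\opt}$ forces the irredundancy hypothesis of Lemma~\ref{lem:T-notEdit}, together with keeping in mind throughout that ``twins'' always refers to twins of $\Go$ (on which $\T$ is defined) rather than of the input graph $G$. Everything else is bookkeeping layered on top of Proposition~\ref{prop:twin} and Lemma~\ref{lem:T-notEdit}.
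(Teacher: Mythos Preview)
Your proposal is correct and follows essentially the same approach as the paper: (i) and (ii) are read off directly from Proposition~\ref{prop:twin}, and (iii) is obtained by observing that optimality of $F_{\opt}$ forces $\Go\DELTA e$ to be a non-cograph for every $e\in F_{\opt}$, so that Lemma~\ref{lem:T-notEdit} applies. Your version is in fact a bit more explicit than the paper's (you spell out the trivial case $F_{\opt}=\emptyset$ and the verification that $G[M]\simeq\Go[M]$), but the argument is the same.
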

\begin{proof}
  Statements (i) and (ii) are an immediate consequences of Proposition
  \ref{prop:twin}.  
  If $e\in \Fo$ then $\Go\DELTA e$ is a non-cograph; otherwise
  $\Fo\setminus \{e\}$ would be an edit set with smaller cardinality,
  contradicting the optimality of $\Fo$.  Thus we can apply Lemma
  \ref{lem:T-notEdit} to infer statement (iii).  
  
\end{proof}

\begin{corollary}
  Let $G = (V,E)$ be a prime graph, 
  $F_{\opt}$ be an optimal cograph edit set,
  $\Go=(V,E\DELTA F_{\opt})$  the resulting cograph and 
  $\T$ be the twin relation on $\Go$. 
  Then $V\not\eqcl \T$.
  \label{cor:twinClassNotPrime}
\end{corollary}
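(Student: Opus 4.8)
The plan is a short argument by contradiction. Suppose that $V\eqcl\T$, i.e., that the whole vertex set of $\Go$ is a single equivalence class of the twin relation on $\Go$. The first step is to feed this assumption into Theorem~\ref{thm:twins}, applied to the class $M=V$. Part~(iii) of that theorem then yields $\{x,y\}\notin\Fo$ for all $x,y\in V$; since every element of a cograph edit set is a two-element subset of $V$, this forces $\Fo=\emptyset$, and hence $\Go=(V,E\DELTA\Fo)=G$. In particular $G$ is itself a cograph.

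The second step is to contradict the primality of $G$. I would close this in either of two equivalent ways. (a) By Theorem~\ref{thm:twins}(ii) the cograph $\Go[V]=\Go=G$ is either $K_{|V|}$ or $\overline{K_{|V|}}$; but in such a graph every subset of the vertices is a module, so for $|V|\ge 3$ there is a two-element non-trivial module, contradicting that the prime graph $G$ has only trivial modules. (b) Alternatively, observe that a prime graph on at least three vertices is connected and has connected complement --- otherwise a union of some but not all connected components of $G$, respectively of $\overline G$, would be a non-trivial module --- so by Theorem~\ref{thm:cograph-characterize} it cannot be a cograph, again a contradiction. Either way, $V\eqcl\T$ is impossible, which is the assertion.

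I do not foresee any genuine obstacle here; the corollary is essentially a bookkeeping consequence of Theorem~\ref{thm:twins}, and the reasoning above is routine. The one point that deserves a word of care is the size assumption silently carried by the phrase ``prime graph'': for the degenerate graphs $K_1$, $K_2$ and $\overline{K_2}$, each of which contains only trivial modules, the statement fails, so one should either stipulate $|V|\ge 3$ --- which, for a prime graph, is the same as $|V|\ge 4$, since no three-vertex graph is prime --- or simply recall that the prime graphs arising in this setting, being quotients $G[M]/\Pmax(G[M])$ of prime modules of a modular decomposition, necessarily have at least four vertices. With that in hand, both routes (a) and (b) above immediately deliver the contradiction.
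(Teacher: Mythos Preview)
Your argument is correct and follows essentially the same route as the paper's proof. The paper's version is simply terser: it first notes that a prime graph is not a cograph, hence $\Fo\neq\emptyset$, and then uses Theorem~\ref{thm:twins}(iii) exactly as you do to force $\Fo=\emptyset$ from the assumption $V\eqcl\T$, obtaining the contradiction directly; your two closing options~(a) and~(b) just make explicit the step ``prime $\Rightarrow$ not a cograph'' that the paper invokes without comment, and your remark on the degenerate small cases is a fair observation about an implicit size hypothesis.
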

\begin{proof}
  As $G = (V,E)$ is a prime graph we know that $\Fo\neq \emptyset$.  If
  $V\eqcl T$, then Theorem \ref{thm:twins} implies that $\{x,y\}\not \in
  \Fo$ for all $x,y\in V$ and hence, $\Fo= \emptyset$, a contradiction.
\end{proof}

The following definitions are important for the concepts for
the ``module merge process'' that we will extensively use
in our approach. 

\begin{definition}[Module Merge]
  Let $G$ and $H$ be arbitrary graphs on the same vertex set $V$
  with their corresponding sets of all modules $\M(G)$ and $\M(H)$, resp.
  We say that a subset $M'=\{M_1,\dots,M_k\}\subseteq \M(G)$ of modules is
  \emph{merged (w.r.t.\ $H$)} -- or, equivalently, the modules in $M'$ are
  \emph{merged (w.r.t.\ $H$)} -- if (a) each of the modules in $M'$ is a
  module of $H$, and (b) the union of all modules in $M'$ is a module of
  $H$ but not of $G$.  More formally, the modules in $M'$ are \emph{merged
    (w.r.t.\ $H$)}, if
  \begin{itemize}\setlength{\itemsep}{0pt}
    \item[{(i)}] $M_1,\dots,M_k \in \M(H)$,
    \item[{(ii)}] $M = \cup_{i=1}^k M_i \in \M(H)$, and 
    \item[{(iii)}] $M\notin \M(G)$.
  \end{itemize}
  If $\{M_1,\dots,M_k\}\subseteq \M(G)$ is merged to a new module $M\in \M(H)$
  we will write this as $M_1\merge\ldots\merge M_k = \merge_{i=1}^k M_i \to M$.
\label{def:module-merge}
\end{definition}

When modules $M_1,\dots,M_k$ of $G$ are merged w.r.t.\ $H$ then all
vertices in $M = \cup_{h=1}^k M_h$ must have the same \out{M}-neighbors in
$H$, while at least two vertices $x\in M_i$, $y\in M_j$, $1\le i\neq j\le
k$ must have different \out{M}-neighbors in $G$.
	
\begin{definition}[Module Merge Edit]
  Let $G=(V,E)$ be an arbitrary graph and $F$ be an arbitrary edit set
  resulting in the graph $H=(V,E\DELTA F)$. Assume that
  $M_1,\dots,M_k\in \M(G)$ are modules that have been merged w.r.t.\ $H$
  resulting in the module $M = \cup_{i=1}^k M_i \in \M(H)$.  Then
  \begin{equation}
    F_{H}(\merge_{i=1}^k M_i \to M) = \{(x,v)\in F\ \mid\ x\in M,
    v\notin M\} 
  \end{equation} 
\end{definition}
The edit set $F_{H}(\merge_{i=1}^k M_i \to M)$ comprises exactly those
(non)edges of $F$ that have been edited so that all vertices in $M$ have
the same \out{M}-neighborhood in $H$. In particular, it contains only
(non)edge of $F$ that are not entirely contained in $G[M]$.

\begin{figure}[htbp]
  \begin{center}
    \includegraphics[viewport=116 386 488 660, clip, scale=1.1]{./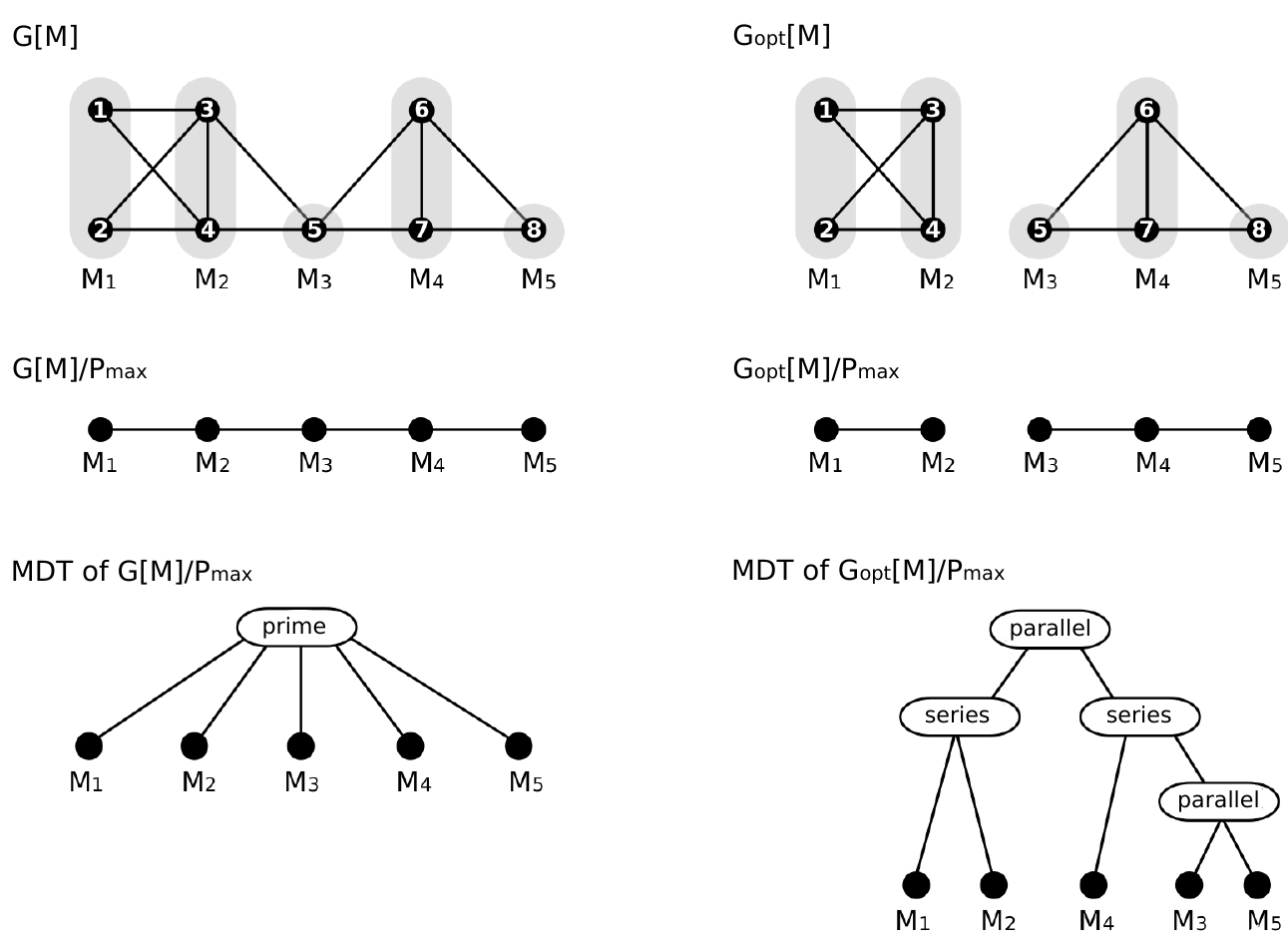}
  \caption{Assume we are given a non-cograph $G$ that contains a strong
    prime modules $M\in \MD(G)$ so that $G[M]$ is the graph in the upper
    left part of this picture.  Moreover, assume there is an optimal
    module-preserving edit set $\Fo$ transforming $G$ to a cograph $\Go$ so
    that $\{3,5\}, \{4,5\} \in \Fo$. Hence, $\Go[M]$ is the graph in the
    upper right part.  Let $\Pmax=\Pmax(G[M])=\{M_1,\ldots,M_5\}$ be the
    maximal modular partition of $G[M]$.  The modules $M_1,\ldots,M_5$
    are highlighted as gray parts in $G[M]$ and $\Go[M]$.  
    Now, $G[M]/\Pmax$ is prime
    and contains no twins, while $\Go[M]/\Pmax$ is a cograph and contains
    therefore twins.  The twin relation $\T$ on $\Go[M]/\Pmax$ has
    equivalence classes $N=\{M_1,M_2\}$, $N'=\{M_4\}$ and
    $N''=\{M_3,M_5\}$.  Hence, the modules $M_1$ and $M_2$, as well as,
    $M_3$ and $M_5$ are merged w.r.t.\ $\Go[M]/\Pmax$. Therefore, the
    modules $N_V=M_1\cup M_2 = \{1,2,3,4\}$ and $N''_V=M_3\cup M_5=\{5,8\}$ 
    have been obtained by merging
    modules of $G[M]$ w.r.t.\
    $\Go[M]$ and in particular, w.r.t.\ $\Go$. In symbols, 
    $M_1\protect\merge M_2 \to N_V$ and $M_3\protect\merge M_5 \to N''_V$.\newline
    Therefore, instead of focussing on algorithms that optimally edit
    induced $P_4$'s one can equivalently ask for optimal edit sets that
    resolve such prime modules $M$, that is, one asks for the minimum number of
    edits that adjust the neighborhoods of modules that are children of $M$
    in $\MDT(G)$ so that these modules become twins until the module $M$
    becomes a non-prime module, see Theorem \ref{thm:Edit=Merge}.  }
  \label{fig:example}
\end{center}
\end{figure}

\begin{lemma}
  Let $M$ be a strong prime module of a given graph $G=(V,E)$ 
  and $\Pmax$ be the maximal modular partition of $G[M]$.
  Moreover, let $F$ be an arbitrary
  edit set resulting in the graph $H=(V,E\DELTA F)$. 
  Assume that $\mb M=\{M_1,\ldots M_k\}\subseteq \Pmax \subseteq \M(G)$ is a set of 
  of modules that are merged w.r.t.\ $H$.

  Then for any distinct indices $a,b,c\in\{1,\dots,k\}$ it holds that
  $M_{a} \merge M_{b}\to M_{ab}$, $M_b \merge M_{a}\to M_{ba}$, and
  $M_{ab}=M_{ba}$ in $H$.  Moreover, $M_{a}\merge (M_{b} \merge M_{c})\to
  M_{a(bc)}$ and $(M_{a} \merge M_{b})\merge M_{c}\to M_{(ab)c}$ satisfy
  $M_{a(bc)}=M_{(ab)c}$, i.e., the merge operation is associative and
  commutative. The merging of any subset $\mb M$ of modules w.r.t.\ a graph
  $H$ therefore is well-defined and independent of the individual merging
  steps.

  Furthermore, 
  \begin{align*}
    F(M_{a}\merge M_{b} \merge M_{c} \to M_{abc}) = 
    & F(M_{a} \merge M_{b}\to M_{ab})\\ 
    &\cupdot \left( F(M_{ab}\merge M_{c}\to 
      M_{abc})\setminus F(M_{a} \merge M_{b}\to M_{ab})\right)
  \end{align*}
  with  $M_{a} \merge M_{b}\to M_{ab}$.
  \label{lem:pair-merge}
\end{lemma}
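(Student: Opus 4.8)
The plan is to handle the three components of the statement in turn. Commutativity and associativity are essentially notational: by Definition~\ref{def:module-merge} the module produced by $M_a\merge M_b$ is $M_a\cup M_b$, so $M_{ab}=M_a\cup M_b=M_b\cup M_a=M_{ba}$, and likewise $M_{a(bc)}=M_a\cup M_b\cup M_c=M_{(ab)c}$; iterating, any bracketing of $M_1\merge\cdots\merge M_k$ evaluates to $\bigcup_{i=1}^k M_i$, so the module obtained by merging $\mb M$ is independent of the order and grouping of the pairwise steps. What genuinely needs to be checked is that each pairwise (and partial) merge is again a legitimate merge in the sense of Definition~\ref{def:module-merge}, and that the edit sets decompose as claimed.

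For a union $N=\bigcup_{i\in S}M_i$ with $S\subseteq\{1,\dots,k\}$, condition~(i) of Definition~\ref{def:module-merge} is inherited from the hypothesis $M_1,\dots,M_k\in\M(H)$. For condition~(iii), that $N\notin\M(G)$ whenever $|S|\ge 2$, I would first note that $\mb M\subsetneq\Pmax$ (otherwise $\bigcup\mb M$ would be the whole strong prime module $M$, hence a module of $G$, contradicting that $\mb M$ is merged) and then invoke Theorem~\ref{thm:all}(T5) with $\P'=\{M_i:i\in S\}$, a proper subfamily of $\Pmax$ of size $>1$, which yields $N\notin\M(G)$. Condition~(ii), $N\in\M(H)$, is the delicate point: it says all vertices of $N$ share an \out{N}-neighborhood in $H$. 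I would deduce it from $\bigcup\mb M\in\M(H)$ by passing to the quotient $H[M]/\Pmax$ (a modular partition of $H[M]$ in the situation of interest, where $M\in\M(H)$ and each $M_i\in\M(H)$): there $\mb M$ appears as a twin class, so the induced subgraph of the quotient on $\mb M$ is complete or edgeless, hence each subfamily $\{M_i:i\in S\}$ is a module of the quotient; Theorem~\ref{thm:all}(T2) together with Lemma~\ref{lem:module-subg} then lifts this to $N\in\M(H)$.

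For the edit-set identity, unwind the definition of $F_H$: for any union $N$ of members of $\mb M$ the associated edit set of the merge producing $N$ is $\{(x,v)\in F:x\in N,\ v\notin N\}$, which depends only on $N$. Consequently $F(M_a\merge M_b\merge M_c\to M_{abc})$ and $F(M_{ab}\merge M_c\to M_{abc})$ are literally the same set, call it $X$ (the edits of $F$ with exactly one endpoint in $M_{abc}=M_a\cup M_b\cup M_c$). Writing $Y:=F(M_a\merge M_b\to M_{ab})$, the claimed formula is precisely $X=Y\cupdot(X\setminus Y)$, i.e.\ the inclusion $Y\subseteq X$. An element of $Y$ is an edit of $F$ with one endpoint in $M_a\cup M_b$ and the other outside $M_a\cup M_b$; it belongs to $X$ unless its outer endpoint lies in $M_c$. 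So the whole matter reduces to showing that $F$ contains no edit between $M_a\cup M_b$ and $M_c$. Since $M_a,M_b,M_c$ are modules of both $G$ and $H$, a single such edit would flip the entire adjacency between $M_a\cup M_b$ and $M_c$; I would rule this out by the mechanism of Lemma~\ref{lem:T-notEdit} and Theorem~\ref{thm:twins} applied at the level of the quotient $H[M]/\Pmax$: the $M_i\in\mb M$ lie in a single twin class of that quotient, and an irredundant edit set never edits within a twin class, so no $(x,v)\in F$ joins two distinct $M_i,M_j\in\mb M$. This gives $Y\subseteq X$, hence the identity, and it is simultaneously what forces the quotient on $\mb M$ to be complete or edgeless, the fact used for condition~(ii).

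The step I expect to be the real obstacle is this last one: excluding edits of $F$ strictly between two of the merged modules, equivalently showing that the intermediate unions $\bigcup_{i\in S}M_i$ are modules of $H$. This is what turns the decomposition into an honest disjoint union, and it is the only place where the hypotheses ``$\mb M\subseteq\Pmax$ of a strong prime module of $G$'' and the twin structure of the quotient $H[M]/\Pmax$ are actually exploited; everything else is bookkeeping with set unions and the definition of $F_H$.
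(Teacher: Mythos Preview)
Your treatment of commutativity, associativity, and condition~(iii) via Theorem~\ref{thm:all}(T5) matches the paper's. The divergence is in condition~(ii) and the edit-set identity, and in both places your argument rests on hypotheses that are not available.

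First, you pass to the quotient $H[M]/\Pmax$ and claim that $\mb M$ ``appears as a twin class'' there. But $F$ is \emph{arbitrary}: nothing guarantees that $M\in\M(H)$, nor that the elements of $\Pmax\setminus\mb M$ are modules of $H$, so $\Pmax$ need not even be a modular partition of $H[M]$ and the quotient may be undefined. Even granting that, $\bigcup\mb M\in\M(H)$ only says the $M_i$ share a common \out{\bigcup\mb M}-neighborhood in $H$; their pairwise adjacencies inside $\bigcup\mb M$ are unconstrained, so the $M_i$ need not be mutual twins and the induced quotient on $\mb M$ need not be complete or edgeless. Second, to exclude edits of $F$ between $M_a\cup M_b$ and $M_c$ you invoke Lemma~\ref{lem:T-notEdit} and Theorem~\ref{thm:twins}; both require the edit set to be irredundant (respectively optimal), whereas here $F$ is arbitrary. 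Without these extra assumptions your route to condition~(ii) and to the inclusion $Y\subseteq X$ does not go through.

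The paper's argument is much more direct and uses neither twins nor irredundancy. For condition~(ii) it argues that if $M_a\cup M_b\notin\M(H)$ then some vertex $v$ lies in the \out{M_a}-neighborhood but not the \out{M_b}-neighborhood in $H$, and that this same $v$ already witnesses $\bigcup_{i=1}^k M_i\notin\M(H)$. For the edit-set identity it performs a purely set-theoretic rewrite of $\{(x,v)\in F: x\in M_{abc},\ v\notin M_{abc}\}$ by splitting on whether $x\in M_{ab}$ or $x\in M_c$, never isolating the inclusion $Y\subseteq X$ that you identified.
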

\begin{proof}
  For commutativity, we show first that $M_{a} \merge M_{b}\to M_{ab}$ for
  every pair of distinct $a,b\in\{1,\dots,k\}$ By definition of ``\merge''
  we have $M_{a},M_{b} \in \M(H)$, and thus, Condition (i) of Def.\
  \ref{def:module-merge} is satisfied.  Moreover, Thm.\ \ref{thm:all} (T5)
  implies that $M_{a}\cup M_{b}\notin \M(G)$ and hence, Condition (iii) of
  Def.\ \ref{def:module-merge} is satisfied.  For Condition (ii) we have to
  show that $M_a\cup M_b=M_{ab}\in \M(H)$.  Assume for contradiction
  that $M_a\cup M_b\notin \M(H)$, then there is a vertex $v\in V$ so that
  $v$ is in the \out{M_a}-neighborhood, but not in the
  \out{M_b}-neighborhood w.r.t.\ $H$, or \emph{vice versa}.  However, this
  remains true if we consider $\cup_{i=1}^k M_i$, which implies that
  $\cup_{i=1}^k M_i\notin \M(H)$, and hence $\mb{M}=\{M_1,\dots,M_k\}$ 
  is not merged w.r.t.\ $H$, a contradiction.  Finally, $M_{a} \merge
  M_{b}\to M_{ab}$ implies that $M_{ab} = M_{a} \cup M_{b} = M_{b} \cup
  M_{a}=M_{ba}$, and thus $\merge$ is commutative.

  By similar arguments one shows that Condition (i), (ii), and (iii) 
  of Def.\ \ref{def:module-merge} are satisfied for $M_a\merge M_b\merge M_c$.
  Hence, since  $M_a\merge M_b\merge M_c\to M_{abc}$ implies that
  $M_{abc} = M_a\cup M_b \cup M_c$, associativity of $\merge$ follows again
  directly from the associativity of the set union.
	
  To see that the last property for $F$ is satisfied, note that
  \begin{align*}
    &F(M_a\merge M_b \merge M_c \to M_{abc}) \\
    &=  \{(x,v)\in F\ \mid\ x\in M_{abc}, v\notin M_{abc}\} \\
    &= \{(x,v)\in F\ \mid\ x\in M_{ab}, v\notin M_{ab}\}  
    \cupdot  \{(x,v)\in F\ \mid\ x\in M_{abc}\setminus M_{ab}, 
                              v\notin M_{abc}\setminus M_{ab}\} \\
    &=   F(M_{a} \merge M_{b}\to M_{ab})
    \cupdot \left( F(M_{ab}\merge M_{c}\to M_{abc})\setminus 
                   F(M_{a}\merge M_{b}\to M_{ab})             \right)
  \end{align*}
   
\end{proof}

Let $G$ be an arbitrary graph and $\Fo$ be a minimum cardinality set of
edits that applied to $G$ result in the cograph $\Go$. We will show that
every module-preserving edit set $\Fo$ can be expressed completely by means
of module merge edits.
To this end, we will consider the strong prime
modules $M\in \MD(G)$ of the given graph $G$ (in particular certain
submodules of $M$ that do not share the same out-neighborhood) and adjust
their out-neighbors to obtain new modules as long as $M$ stays a prime
module. This procedure is repeated for all prime modules of $G$, until no
prime modules are left in $G$.

As mentioned above, if $G$ is not a cograph there must be a strong prime
module $M$ in $G$. Let $\Pmax$ be the maximal modular partition of
$G[M]$. Theorem \ref{thm:all} implies that $G[M]/\Pmax$ is prime and thus,
does not contain twins.  However, if $\Fo$ is module preserving, then Lemma
\ref{lem:quotient-cograph} implies that the graph $\Go[M]/\Pmax$ is a
cograph and thus, contains non-trivial twins by Theorem
\ref{thm:cograph-characterization}.

Hence, we aim at finding particular submodules $M_1,M_2,\dots,M_k \subset
M$ in $G$ that have to be merged so that they become twins in
$\Go[M]/\Pmax$.  As the following theorem shows, repeated application of
this procedure to all prime modules of $G$ will completely resolve those
prime modules resulting in $\Go$. An illustrative example is shown in
Figure \ref{fig:example}.

For the proof of the final Theorem \ref{thm:Edit=Merge} we first establish
the following result.
\begin{lemma}
  Let $G=(V,E)$ be graph, $F_{\opt}$ be an optimal module-preserving cograph
  edit set, and $\Go=(V,E\DELTA F_{\opt})$ be the resulting cograph. 
  Furthermore, let $M$ be an arbitrary strong prime module of $G$
  that does not contain any other strong prime module,  
  and $\Pmax:=\Pmax(G[M])=\{M_1,\dots, M_k\}$
  be the maximal modular partition of $G[M]$. 
  Moreover, let 
  $F' = \{\{x,y\}\in\Fo \mid \exists M_i,M_j\in \Fo(M,\Pmax)
  \text{\ with\ } x\in M_i, y\in M_j\}$, 
  where $\Fo(M,\Pmax)$ is the edit set to implied by $\Fo$
  on  $G[M]/{\Pmax}$ defined in Remark \ref{rem:all}.	 

  Then, every strong prime module of $H = (V,E\DELTA F')$ is a strong prime
  module of $G$. Moreover, $\Pmax(H[M']) = \Pmax(G[M'])$ holds for each
  strong prime module $M'$ of $H$.
\label{lem:strongHG}
\end{lemma}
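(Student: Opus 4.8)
The plan is to analyze the edit set $F' = \Fo[M]$ restricted to the ``cross-module'' (non)edges inside $M$ — i.e., those (non)edges of $\Fo$ joining two distinct children $M_i, M_j \in \Pmax(G[M])$ — and to understand exactly which modules change when we pass from $G$ to $H = (V, E\DELTA F')$. Since $M$ contains no strong prime submodule, all of $M$'s proper strong submodules sit inside the non-prime children of $M$; by Lemma~\ref{lem:opt-modul-edit} the edits of $\Fo$ that lie strictly inside a child $M_i$ are confined to $G[M_i]$, and $F'$ deliberately omits those. Hence $F'$ only ever modifies (non)edges between children of $M$ or (non)edges with exactly one endpoint in $M$; in fact, since $\Fo$ is module-preserving, $F'$ contains no (non)edge with exactly one endpoint in $M$ either, so $F'$ lives entirely ``between children of $M$''. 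The first key step is therefore: \emph{$H$ and $G$ agree outside $M$, each child $M_i\in\Pmax(G[M])$ is still a module of $H$, and $M$ itself is still a module of $H$}. The containments outside $M$ are immediate; that each $M_i$ stays a module of $H$ follows because $F'$ edits only whole bundles of edges between children (module-preservation of $\Fo$), so no vertex of $M_i$ acquires an \out{M_i}-neighbor that another vertex of $M_i$ lacks.

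The second step is to pin down the strong prime modules of $H$. Every strong module of $G$ that is \emph{not} contained in $M$ is untouched by $F'$ (their induced subgraphs and \out{}-neighborhoods are unchanged), so it remains a strong module of $H$ of the same type; conversely, by the module-preserving property, $H$ introduces no new modules straddling the boundary of $M$ or cutting across children, so every strong module of $H$ is either a strong module of $G$ disjoint from or containing $M$, or is contained in some child $M_i$, or equals a union of children of $M$. A union of children of $M$ can only be a module of $H$ if $H[M]/\Pmax$ is non-prime there — but then such a module is non-prime by Theorem~\ref{thm:all}(T2), because the quotient $H[M]/\Pmax$ is a cograph (apply Lemma~\ref{lem:quotient-cograph} to $H[M]$, noting $\M(G[M])\subseteq\M(\Go[M])$ and that $F'$ is precisely the part of $\Fo$ realizing $\Fo(M,\Pmax)$ on the quotient, so $H[M]/\Pmax = \Go[M]/\Pmax$ is a cograph). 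Modules contained in a child $M_i$: since $M_i$ is non-prime (no strong prime submodule of $M$), it has no strong prime submodule in $G$, and $F'$ does not touch $G[M_i]$, so $H[M_i] = G[M_i]$ contributes no strong prime module. And $M$ itself becomes non-prime in $H$ exactly because $H[M]/\Pmax$ is a cograph. Putting this together: \emph{every strong prime module of $H$ is a strong module of $G$ that is disjoint from $M$, hence a strong prime module of $G$.}

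The third step handles the claim about maximal modular partitions. Let $M'$ be a strong prime module of $H$; by the above $M'$ is a strong prime module of $G$ disjoint from $M$, so $G[M'] = H[M']$ as induced subgraphs (nothing in $F'$ has both endpoints in, or even touches, $M'$). Since $\Pmax(G[M'])$ depends only on the isomorphism type of $G[M']$, we get $\Pmax(H[M']) = \Pmax(G[M'])$ immediately.

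The main obstacle, and the place to be careful, is the second step's dichotomy on strong modules of $H$: one must argue rigorously that module-preservation of $\Fo$ forces $H = (V, E\DELTA F')$ to have no ``surprise'' modules — in particular no strong module overlapping the boundary of a child of $M$, and no strong module straddling $\partial M$. The clean way is to observe that $F'$ edits only complete edge-bundles between children of $M$ and between $M$ and the rest (the latter being empty), so for any $x,y$ lying in a common child, or in a common strong module of $G$ outside $M$, their symmetric difference of neighborhoods is unchanged from $G$ to $H$; conversely any pair in distinct children of $M$ that were distinguishable outside $M$ in $G$ may become indistinguishable in $H$, but only via the quotient, i.e., only producing modules that are unions of children — which are handled by the cograph-ness of $H[M]/\Pmax$ and (T2). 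This reduces everything to the already-established fact that $H[M]/\Pmax = \Go[M]/\Pmax$ is a cograph.
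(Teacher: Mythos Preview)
Your overall strategy mirrors the paper's: use that $M$ remains a module of $H$ to get a trichotomy for any strong prime module $M'$ of $H$ (inside $M$, disjoint from $M$, or strictly containing $M$), and dispose of the ``inside $M$'' case by showing $H[M]$ (equivalently its quotient by $\Pmax$) is a cograph. That part is fine.

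The gap is your handling of the case $M\subsetneq M'$. You assert that every strong module of $G$ ``not contained in $M$'' is ``untouched by $F'$ (their induced subgraphs and \out{}-neighborhoods are unchanged)'', and you then collapse the conclusion to ``every strong prime module of $H$ is a strong module of $G$ that is \emph{disjoint} from $M$''. Both claims are wrong when $M'\supsetneq M$: all edits in $F'$ live inside $M\subseteq M'$, so while the \out{M'}-neighborhood is preserved, $H[M']\neq G[M']$ in general, and such $M'$ can (and typically do) remain strong prime modules of $H$. Your step~3 then fails as well, since it relies on $G[M']=H[M']$ to conclude $\Pmax(H[M'])=\Pmax(G[M'])$.

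The paper fills exactly this hole. For $M'\supsetneq M$ a strong prime module of $H$: first, $M'$ is a module of $G$ (its \out{M'}-neighborhood is unchanged); second, $M'$ is strong in $G$ by contradiction (any module $M''$ of $G$ overlapping $M'$ must satisfy $M\cap M''=\emptyset$ since $M$ is strong, hence $M''$ is still a module of $H$ overlapping $M'$). For primality and the partition claim, the key observation is that $M$ lies inside a single element of $\P'_{\max}:=\Pmax(G[M'])$ (because $M$ and the elements of $\P'_{\max}$ are all strong in $G$); consequently $F'$ does not change the \out{M''}-neighborhood of any $M''\in\P'_{\max}$, so $\P'_{\max}$ is still the maximal modular partition of $H[M']$ and the quotients $G[M']/\P'_{\max}$ and $H[M']/\P'_{\max}$ coincide. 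This yields both that $M'$ is prime in $G$ and that $\Pmax(H[M'])=\Pmax(G[M'])$. You need to add this argument; the ``no surprise modules'' discussion in your last paragraph does not cover it.
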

\begin{proof}	
  First note, that there is no other module $M'\subsetneq M$
  containing induced $P_4$'s because $M$ does not contain any other strong
  prime module $M'\subsetneq M$ and because by Theorem \ref{thm:all} (T5)
  the union $\cup_{M'\in P'} M'$ is not a module of $G$ for any of the
  subsets $P'\subsetneq \Pmax$. Together with Lemma
  \ref{lem:opt-modul-edit} this implies that $F'=\{\{x,y\}\in\Fo \mid
  x,y\in M\}$. Thus, $H[M]$ is a cograph as otherwise, there would be
  induced $P_4$'s contained in $H[M]$ and there are no further edits in
  $\Fo\setminus F'$ to remove these $P_4$'s. Thus, such $P_4$'s would
  remain in $\Go$, a contradiction.
 
  Let $M'$ be an arbitrary strong prime module of $H$. Note, since $F'$
  does not affect the \out{M}-neighborhood, $M$ is still a module of
  $H$. Since $M'$ is a strong module in $H$ we have, therefore, either,
  $M'\subseteq M$, $M\subsetneq M'$ or $M\cap M'=\emptyset$ in $G'$.
  
  Assume that $M'\subseteq M$. Since $M'$ is prime in $H$ it follows that
  $H[M']$ is a non-cograph, a contradiction, since $H[M']$ is an induced
  subgraph of the cograph $H[M]$. Hence, the case $M'\subseteq M$ cannot
  occur. If $M\subsetneq M'$ or $M\cap M'=\emptyset$, then the
  \out{M'}-neighborhood of any vertex contained in $M'$ have not been
  affected by $F'$, and thus, $M'$ is also a module of $G$.
 
  It remains to show for the cases $M\subsetneq M'$ or $M\cap M'=\emptyset$
  that the module $M'$ of $G$ is also strong and prime.  If $M\cap
  M'=\emptyset$, then $F'$ does not affect any vertex of $M'$ and hence,
  $M'$ must be a strong prime module of $G$ and, in particular, $\Pmax(H[M'])
  = \Pmax(G[M'])$.

  Let $M\subsetneq M'$ and assume for contradiction that $M'$ is not strong
  in $G$. Thus, $M'$ must overlap some other module $M''$ in $G$. However,
  since $M$ is a strong module of $G$, $M$ cannot overlap $M''$ and hence,
  $M\cap M''=\emptyset$. However, as $F'$ does only affect the vertices
  within $M$ and, in particular, none of the vertices of $M''$ and since
  $M'$ and $M''$ overlap in $G$, they must also overlap in $H$, a
  contradiction. Thus, $M'$ is a strong module of $G$.
  
  Furthermore, let again $M\subsetneq M'$ and assume that $M'$ is not prime
  in $G$. Now, let $\P'_{\max}$ be the maximal modular partition of
  $G[M']$. Since $M$ and $M'$ are strong modules in $G$ and $M\subsetneq
  M'$, we have either $M\in\P'_{\max}$ or $M\subsetneq M''\in\P'_{\max}$
  for some strong module $M''\in \Pmax$. Hence, all modules of $\Pmax$ are
  entirely contained in the modules of $\P'_{\max}$. In particular, this
  implies that we have not changed the \out{M''}-neighborhood for any
  $M''\in \P'_{\max}$, and therefore, the maximal modular partition of
  $G[M']$ is also the maximal modular partition of $H[M']$, i.e.,
  $\Pmax(H[M']) = \Pmax(G[M'])$.  However, if $M'$ is not prime in $G$,
  then $G[M']/\P'_{\max}$ is either totally disconnected ($M'$ is parallel)
  or a complete graph ($M'$ labeled series), while $H[M']/\P'_{\max}$ is
  prime (hence, it does contain only trivial modules). Clearly, if
  $G[M']/\P'_{\max}$ is totally disconnected or a complete graph, any
  subset $P'\subseteq P$ forms a module and thus, $G[M']/\P'_{\max}$ does
  not contain only trivial modules. In summary, $G[M']/\P'_{\max}\not\simeq
  H[M']/\P'_{\max}$, a contradiction, since we have not changed the
  \out{M''}-neighborhood for any $M''\in \P'_{\max}$.   
\end{proof}

\begin{theorem}
  Let $G=(V,E)$ be graph, $F_{\opt}$ be an optimal module-preserving cograph
  edit set, and $\Go=(V,E\DELTA F_{\opt})$ be the resulting cograph.
  Denote by $\MDP(G)\subseteq \MD(G)$ the set of strong \emph{prime}
  modules of $G$.
  
  Furthermore, let $\mb{M}_i$ denote a set of modules that have been merged
  w.r.t.\ $\Go$ resulting in the new module $\mathcal{M}_i = \bigcup_{M\in
  \mb{M}_i} M \in \M(\Go)$, where $\mb{M}_i$ is a subset of some
  $\Pmax(G[M'])$ for some $M'\in \MDP(G)$. In other words, each $\mb{M}_i$
  contains only the children of strong prime modules of
  $G$.
  
  Let $\mb{M}_1,\dots,\mb{M}_r$ be 
  all these sets of modules of $G$ that have been merged w.r.t.\ $\Go$ into
  respective modules $\mathcal{M}_1,\dots, \mathcal M_r$ and
  $\mathcal M = \{ \mathcal{M}_1,\dots, \mathcal M_r\}$ denote the set of
  these resulting new modules in $\Go$. Then 
  \[ F_{\opt} = F_{\mathcal M} :=\bigcup_{i=1}^r 
     F_{\Go}\bigg(\Merge_{M\in \mb M_i}M\to \mathcal M_i \bigg) \] 
\label{thm:Edit=Merge}
 \end{theorem}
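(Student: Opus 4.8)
The plan is to establish the two inclusions $F_{\mathcal{M}}\subseteq F_{\opt}$ and $F_{\opt}\subseteq F_{\mathcal{M}}$. The first is immediate from the definitions: $F_{\Go}(\Merge_{M\in\mb{M}_i}M\to\mathcal{M}_i)$ consists, by definition, of exactly those (non-)edges of $F_{\opt}$ having precisely one endpoint in $\mathcal{M}_i$, so it is contained in $F_{\opt}$, and hence so is the union over $i$. For the reverse inclusion I would show that every edit $\{x,y\}\in F_{\opt}$ belongs to $F_{\Go}(\Merge_{M\in\mb{M}_i}M\to\mathcal{M}_i)$ for at least one $i$; equivalently, that there is a merged family $\mb{M}_i\subseteq\Pmax(G[M'])$, with $M'\in\MDP(G)$, whose union $\mathcal{M}_i$ contains exactly one of $x$ and $y$.

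The crucial auxiliary fact, which I would prove first, is the following claim: \emph{if $A$ and $B$ are disjoint modules of $G$ whose union $A\cup B$ is a module of $\Go$, then $F_{\opt}$ contains no pair $\{a,b\}$ with $a\in A$ and $b\in B$.} Since $F_{\opt}$ is module-preserving, $A$ and $B$ are also modules of $\Go$. If some such pair lay in $F_{\opt}$, then---because two disjoint modules are uniformly adjacent or uniformly non-adjacent, both in $G$ and in $\Go$, and an edit between them flips this common status---\emph{all} $|A|\cdot|B|\ge 1$ pairs between $A$ and $B$ would lie in $F_{\opt}$. Let $\Go'$ arise from $\Go$ by flipping all these pairs back. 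Then $A\cup B$ is still a module of $\Go'$, $\Go'[A\cup B]$ is the join or the disjoint union of the cographs $\Go[A]$ and $\Go[B]$ and hence a cograph, and $\Go'$ differs from $\Go$ only on pairs inside $A\cup B$. By Theorem~\ref{thm:all}~(T3) applied to the module $A\cup B$ of $\Go'$, an induced $P_4$ of $\Go'$ either lies inside $\Go'[A\cup B]$ (impossible) or meets $A\cup B$ in at most one vertex, in which case its four vertices already induce a $P_4$ in $\Go$ (impossible). So $\Go'$ is a cograph obtained from $G$ by $F_{\opt}$ with $|A|\cdot|B|\ge 1$ pairs deleted, contradicting optimality.

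Now fix $\{x,y\}\in F_{\opt}$, let $M^*$ be the smallest strong module of $G$ containing both $x$ and $y$, and let $M_x\ne M_y$ be the children of $M^*$ containing $x$ and $y$. If $M^*$ were non-prime then $M_x\cup M_y$, a union of children of a non-prime module, would be a module of $G$, hence of $\Go$, and the claim would forbid $\{x,y\}\in F_{\opt}$; so $M^*\in\MDP(G)$. Put $\Pmax:=\Pmax(G[M^*])$. Since $F_{\opt}$ is module-preserving, $\M(G[M^*])\subseteq\M(\Go[M^*])$, so Lemma~\ref{lem:quotient-cograph} makes $\Go[M^*]/\Pmax$ a cograph. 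If $M_x$ and $M_y$ were twins in $\Go[M^*]/\Pmax$, then $\{M_x,M_y\}$ would be a module there, so $M_x\cup M_y$ would be a module of $\Go$ by Theorem~\ref{thm:all}~(T2) and Lemma~\ref{lem:module-subg}, again contradicting the claim; hence $M_x,M_y$ are not twins in $\Go[M^*]/\Pmax$. In the cotree $C$ of $\Go[M^*]/\Pmax$ set $w:=\lca_C(M_x,M_y)$ and let $c_x\ne c_y$ be the children of $w$ with $M_x\in L_{c_x}$ and $M_y\in L_{c_y}$; as $M_x,M_y$ are not twins, they are not both leaf-children of $w$, so I may assume $c_y$ is an internal node, which gives $|L_{c_y}|\ge 2$ and $L_{c_y}\subsetneq\Pmax$. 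Then $\mb{M}_i:=L_{c_y}$ is merged w.r.t.\ $\Go$ in the sense of Definition~\ref{def:module-merge}: its members are children of $M^*$, hence in $\M(G)\subseteq\M(\Go)$; $\bigcup_{M\in L_{c_y}}M\in\M(\Go)$ by (T2) and Lemma~\ref{lem:module-subg}; and $\bigcup_{M\in L_{c_y}}M\notin\M(G)$ by (T5). Thus $\mb{M}_i$ is among $\mb{M}_1,\dots,\mb{M}_r$, and for $\mathcal{M}_i=\bigcup_{M\in L_{c_y}}M$ we have $y\in M_y\subseteq\mathcal{M}_i$ while $x\in M_x$ is disjoint from $\mathcal{M}_i$; therefore $\{x,y\}\in F_{\Go}(\Merge_{M\in\mb{M}_i}M\to\mathcal{M}_i)\subseteq F_{\mathcal{M}}$. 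This yields $F_{\opt}\subseteq F_{\mathcal{M}}$, and with the first inclusion, $F_{\opt}=F_{\mathcal{M}}$.

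I expect the main obstacle to be the auxiliary claim, and in particular the observation that it must also be invoked in the degenerate case where $M_x$ and $M_y$ happen to be twins in the quotient $\Go[M^*]/\Pmax$: without excluding that case, the ``obvious'' merged families (the non-trivial twin classes of $\Go[M^*]/\Pmax$) need not separate the two endpoints of an edit, and the claimed identity could fail. A variant would be to induct on $|\MDP(G)|$, removing one minimal strong prime module at a time by means of Lemma~\ref{lem:strongHG}; but verifying that the induced edit set on the reduced graph is again optimal and module-preserving needs essentially the same optimality argument.
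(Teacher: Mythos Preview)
Your argument is correct, and it takes a genuinely different route from the paper's own proof.

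The paper proves the identity by an explicit iterative construction: it builds a sequence $G^0=G,G^1,\dots,G^n=\Go$ in which, at each stage, one chooses a strong prime module $M$ of $G^k$ that is minimal among the prime modules, reads off the twin classes of $\Go[M]/\Pmax(G^k[M])$, and extracts from $\Fo$ the edit set $E^k$ needed to realise those twin classes. A separate inductive argument (the paper's \emph{PART 2} together with Lemma~\ref{lem:strongHG}) then shows that every strong prime module of $G^k$ is already a strong prime module of $G$ with the same maximal modular partition, so that each $E^k$ is in fact a union of merge edits of children of prime modules of $G$. Termination and the chain $\bigcup_k E^k\subseteq F_{\mathcal M}\subseteq\Fo=\bigcup_k E^k$ give the result.

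Your approach bypasses the whole sequence of intermediate graphs. The engine is your auxiliary claim, which is a clean strengthening of the paper's Theorem~\ref{thm:twins}(iii): instead of forbidding edits between twins of $\Go$, you forbid edits between any two disjoint modules $A,B$ of $G$ whose union is a module of $\Go$. With this in hand, for an arbitrary edit $\{x,y\}$ you locate the smallest strong module $M^*$ containing both endpoints, force $M^*$ to be prime, force the children $M_x,M_y$ to be non-twins in $\Go[M^*]/\Pmax$, and then read off a separating merged family directly from the cotree of that quotient. This is shorter and more conceptual; the paper's longer construction, on the other hand, is what justifies the pairwise-merge heuristic of Algorithm~\ref{alg:coH}, since it exhibits $\Fo$ as an ordered sequence of merge steps rather than merely as a union of merge-edit sets.

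One small point worth making explicit in your write-up: when you say ``I may assume $c_y$ is internal'', you are using that two leaves of a cotree are twins precisely when they share the same parent; since $M_x$ and $M_y$ are not twins in $\Go[M^*]/\Pmax$, they cannot both be leaf-children of $w$, and the symmetric choice $L_{c_x}$ works equally well if only $c_x$ is internal.
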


\begin{proof}	
  The proof follows an iterative process, starting with the input graph
  $G^0 = G$, and proceeds by stepwisely editing \emph{within} certain
  strong prime modules $M$ resulting in new graphs $G^1,G^2,\dots, G^n =
  \Go$ and $F_{\opt} = F_{\mathcal{M}}$ for some integer $n$.  In each
  step, which leads from a non-cograph $G^k$ to $G^{k+1}$, we operate on
  one strong prime module $M\in \MD(G^k)$ of $G^k$ that does not contain
  any other strong prime module.  We write $\Pmaxi{k} := \Pmax(G^k[M])$ for
  the maximal modular partition of $G^k[M]$.  Theorem \ref{thm:all} (T4)
  implies that $G^k[M]/\Pmaxi{k}$ is prime and thus does not contain
  twins. Lemma \ref{lem:quotient-cograph} implies that $\Go[M]/\Pmaxi{k}$
  is a cograph and hence, by Theorem \ref{thm:cograph-characterize}
  contains twins.

  We will show in \emph{PART 1} that each equivalence class $N$ of the twin
  relation $N\eqcl\T$ on $\Go[M]/\Pmaxi{k}$ yields a set of vertices $N_V
  \subseteq V$, where each $N_V$ is a module of $\Go$ but not of $G^k$. In
  particular, every such $N_V$ is obtained by merging modules of $G^k$ only
  so that only $N_V$ is affected.  Application of merge edit operations
  contained in $\Fo$ to obtain these new modules $N_V$ results in the new
  graph $G^{k+1}$. The procedure is then repeated unless the new graph
  $G^{k+1}$ equals $\Go$.

  We then show in \emph{PART 2}, that none of the new modules $N_V$ is a
  module of the starting graph $G$. Furthermore we will see that $N_V$ is
  obtained not only by merging modules of $G^{k}$ but also by merging
  modules of $G$.  Finally we use these two results to show that $F_{\opt}
  = F_{\mathcal M}$.  

\par\bigskip\noindent\emph{PART 1:} \\
We start with the base case, and show how to obtain the graph $G^1$ from
$G^0:=G$. Set $F^0 = \Fo$. W.l.o.g., assume that $G^0$ is a not a cograph
and thus that it contains prime modules. Let $M\in\MD(G^0)$ be a strong
prime module of $G^0$ that does not contain any other strong prime module.

By the preceding arguments, $G^0[M]/\Pmaxi{0}$ is prime and thus does not
contain twins, while $\Go[M]/\Pmaxi{0}$ is a cograph and contains twins. In
particular, there must be vertices (representing strong modules
of $G^0$) in $G^0[M]/\Pmaxi{0}$ that become twins in $\Go[M]/
\Pmaxi{0}$. Let $\T$ be the twin relation of $\Go[M]/\Pmaxi{0}$ and denote,
for a given equivalence classes $N\eqcl \T$, by $N_V \subseteq V$ the set
of twins contained in $N$, i.e., $N_V:=\bigcup_{M_j\in N} M_j =
\bigcup_{M_j\in N}\{v\in M_j\}$.  By Theorem \ref{thm:twins}, each
$N\eqcl\T$ is a module of $\Go[M]/\Pmaxi{0}$.  Lemma
\ref{lem:quotient-cograph} implies that $\Pmaxi{0}$ is a modular partition
of $\Go$ and thus we can apply Theorem \ref{thm:all} (T2) to conclude that
$N_V$ is a module of $\Go[M]$. Since $\Fo$ is module preserving, it follows
that $M$ is a module of $\Go$ and thus, by Lemma \ref{lem:module-subg},
$N_V$ is a module of $\Go$. Corollary \ref{cor:twinClassNotPrime}
furthermore implies that $N_V\neq M$ for all $N\eqcl \T$.  Moreover, by
Theorem \ref{thm:all} (T5), none of the equivalence classes $N\eqcl T$ with
$|N|>1$ are modules of $G^0[M]/\Pmaxi{0}$. Hence, by Theorem \ref{thm:all}
(T2), $N_V$ is not a module of $G^0[M]$ for all $N\eqcl \T$ with $|N|>1$
and hence, Lemma \ref{lem:module-subg} implies that $N_V$ is not a module
of $G^0$ for all $N\eqcl\T$ with $|N|>1$. Since each equivalence class
$N\eqcl \T$ with $|N|>1$ comprises modules of $G^0 = G$, and for the
respective vertex set $N_V$ we have both $N_V\notin \M(G^0)$ and
$N_V\in\M(\Go)$, the modules contained in $N$ are merged to $N_V$ w.r.t.\
$\Go$, in symbols $\Merge_{M_i\in N} M_i \to N_V$ for all $N\eqcl \T$ with
$|N|>1$.

\smallskip

We write $\mathcal{M}^1 := \{N_V \mid N_V =\cup_{M_i\in N} M_i, N\eqcl \T,
|N|>1\}$ for the set of all new modules $N_V$ that are the result of
merging modules of $G^0[M]$. We continue to construct the graph $G^1$.  To
this end, we set $F^0=\Fo$ and show that the subset $E^0\subseteq \Fo$,
where $E^0$ contains all pairs of vertices $\{x,y\}$ with $x\in N_V$ and
$y\in M\setminus N_V$ for all $N_V\in\mathcal{M}^1$, is non-empty. Let
$F(\T) =\{(M_i,M_j)\in F^0(M,\Pmaxi{0})\mid M_i\in N, M_j\notin N, N\eqcl
\T, |N|>1\}$, where $F^0(M,\Pmaxi{0})$ is the edit set implied by $F^0$ as 
defined in Remark \ref{rem:all}.

Theorem \ref{thm:twins} implies that for all $x,y\in N_V$ it holds that
$\{x,y\}\notin \Fo$ for any $N\eqcl \T$. As either all or none of the edges
between any two modules are edited, we conclude that $\{M_i,M_j\}\notin
F(\T)$ holds for all $N\eqcl\T$ and all $M_i,M_j\in N$.  However, since
$G^0[M]/\Pmaxi{0}$ does not contain any pair of twins, while
$\Go[M]/\Pmaxi{0}$ does, it must have $F(\T) \neq \emptyset$.  Hence, $E^0
= \{\{x,y\}\in F^0 \mid x\in M_i, y\in M_j, (M_i,M_j)\in F(\T)\}\neq
\emptyset$. Theorem \ref{thm:twins} implies $\{x,y\}\notin E^0$ for all
$x\in M_i, y\in M_j$, all $M_i,M_j\in N$ and all $N\eqcl \T$. Thus,
$\{x,y\}\notin E^0$ for all $x,y\in N_V$.  Since $E^0\neq \emptyset$, it
contains only pairs of vertices $\{x,z\}$ with $x\in N_V\subseteq M$ and
$z\in M\setminus N_V$. In other words, $E^0$ comprises only pairs of
vertices $\{x,y\}\subset M$ that have been edited to merge the modules of
$G^0[M]$ resulting in the new modules $N_V\in \mathcal M^1$. Note that
there might be additional edits $\{x,y\}\in \Fo\setminus E^0$ for some
$x\in N_V$ and $y\in V\setminus N_V$.  Nevertheless we have $E^0\subseteq
\bigcup_{N\eqcl \T} F_{\Go}(\Merge_{M_i\in N} M_i \to N_V)\subseteq
F_\mathcal{M}$.

\smallskip

Finally, set $G^1 = (V, E\DELTA E^0)$. Thus, all $N_V\in \mathcal M^1$ are
modules of $G^1$.  If $G^1 = \Go$, then $E^0 = \Fo$. By construction $E^0
\subseteq F_{\mathcal M} \subseteq \Fo$. Hence we can conclude that
$F_\mathcal{M} = \Fo$ and we are done. 
	
\smallskip 

Now we turn to the general editing step. If $G^k$, $k\geq1$ is not a
cograph, then we define the set $F^{k}=F^{k-1}\setminus E^{k-1}$.  We can
re-use exactly the same arguments as above. Starting with a strong prime
module $M\in \MD(G^{k})$ (that does not contain any other strong prime
module), we show that the resulting set $E^{k}$ is not empty and comprises
(non)edges in $F^{k}\subseteq \Fo$ so that $E^k\subseteq
F_{\mathcal{M}}$. In particular, we find that $E^k$ contains only the
(non)edges that have been edited so that for all new modules $N_V\in
\mathcal M^{k+1}$ of the graph $G^{k+1} = (V,E\DELTA (\cup_{j=0}^{k} E^j))$
it holds that $N_V\notin \M(G^{k})$, $N_V\in \M(\Go)$ and that $N_V$ is the
union of modules contained in $\M(G^{k})$ that are, in particular, children
of the chosen prime module in $G^{k}$.

\par\bigskip\noindent\emph{PART 2:} \\
It remains to show that every $N_V\in\mathcal M^{k+1}$ has the following two
properties: 
\begin{itemize}
\item[(a)] $N_V\notin \M(G)$; and 
\item[(b)] $N_V$ is a module that has been obtained by merging modules that
  are children of prime modules of $G$ and not only by merging such modules
  of $G^{k}$.
\end{itemize}
We first prove the following
\par\noindent\emph{Claim 1:} 
Any module of $G^0=G$ is also a module
of $G^k$.

\par\noindent\emph{Proof of Claim 1.}
\par\noindent We proceed by induction. 
Let $k=1$ and thus, $G^1 = (V, E\DELTA E^0)$. Let $M^*$ be a module of
$G^0=G$. If $\{x,a\}\notin E^0$ for all vertices $x\in M^*$ then $M^*$ is a
module in $G^1$ because the \out{M^*}-neighborhood in $G^1$ is not
changed. Now assume that there is some vertex $x\in M^*$ with $\{x,a\}\in
E^0$. If all such $a$ with $\{x,a\}\in E^0$ are also contained in $M^*$,
then $M^*$ is still a module in $G^1$. Therefore, in what follows assume
that $a\notin M^*$.

Let $M$ be the  strong prime module that has been used to edit $G^0$ to
$G^1$ in our construction above.  Since for all $\{x,a\}\in E^0$ we have by
construction $x,a\in M$, we can conclude that $M\cap M^* \neq
\emptyset$. Since $M$ is a strong module we have $M\subseteq M^*$ or
$M^*\subseteq M$. However, from $a\in M$ and $a\notin M^*$ it follows that
$M^* \subsetneq M$. Write $\Pmaxi{0}$ for the maximal modular partition of
$G^0[M]$. Theorem \ref{thm:all} (T5) implies that for any non-trivial
subset $\mb P'$ of $\Pmaxi{0}$ the union of elements cannot be a module of
$G^0$. In other words, $M^*$ is not the union of elements of any such
subset $\mb P'$. Therefore, if $M^* \subsetneq M$, then $M^* = M_i$ for
some module $M_i\in \Pmaxi{0}$. If $M_i=\{x\}$, then $M_i$ trivially
remains a module of $G^1$. Hence assume $|M_i|>1$. By construction
$M_i\subseteq N_V$ for some $M_i\in N\eqcl \T$. There are two cases: (1)
$|N|>1$ and thus $M_i\subsetneq N_V$, and (2) $|N|=1$, i.e., $N=\{M_i\}$,
and thus $N_V = M_i$ is not merged.
\begin{itemize}
\item[{Case (1)}] Assume first that the edge $(x,a) \in E^0$ with $x\in
  M_i$ and $a\notin M_i$ was added. Since $\Fo$ is module preserving,
  $N_V\in \M(\Go)$. Hence, by construction of $E^0$, we can conclude that
  all vertices $y\in N_V\subseteq M$ must be adjacent to $a$ in
  $G^1[M]$. Since $M_i\subseteq N_V$, all $y\in M_i$ must be adjacent to
  $a$ in $G^1[M]$.  Analogously, if the edge $(x,a)$ has been removed, then
  all $y\in M_i$ must be non-adjacent to $a$. Hence, elements in $M_i$ have
  the same \out{M_i}-neighborhood in $G^1[M]$ and thus, $M_i$ is a module
  of $G^1[M]$.  Since our construction does not change the
  \out{M}-neighborhood, i.e., $M$ is a module of $G^1$, Lemma
  \ref{lem:module-subg} implies that $M_i$ is also a module of $G^1$.
\item[{Case (2)}] Recall first that $E^0$ comprises only pairs of vertices
  $\{u,v\}\subset M$ that have been edited to merge the modules of $G^0[M]$
  resulting in the new modules in $\mathcal M^1$. Since $N_V$ is not merged
  we have $N_V\notin \mathcal M^1$ and thus, we cannot directly assume that
  $\{y,a\}$ with $y\in N_V=M_i$ are in $E^0$. However, as there was some
  edit $\{x,a\}\in E^0$ there must be some other merged module $N'_V\in
  \mathcal{M}^1$ with $a\in N'_V$. Moreover, since $\Fo$ is module
  preserving it follows that $M_i$ is a module of $\Go$. Hence, if $\{x,a\}
  \in E^0\subseteq \Fo$ was added, then all pairs of vertices $\{y,a\}$
  with $y\in M_i$ not adjacent to $a$ must be contained in $\Fo$ and thus,
  in particular, $a\in N'_V$, which implies $\{y,a\}$ in $E^0$. Analogous
  arguments apply if $\{x,a\}$ was removed. Therefore, either all vertices
  in $M_i$ are adjacent to $a$ or all of them as nonadjacent to $a$ in
  $G^1[M]$. Therefore $M_i$ is a module of $G^1[M]$. As in case (1) we can
  now argue that $M_i$ is also a module of $G^1$.
\end{itemize}
To summarize, all modules of $G^0=G$ are modules of $G^1$. Assume the
statement is true for $k$ and assume that $G^{k}$ is not a cograph (since
there is nothing more to show if $G^{k}$ is a cograph). Applying the same
arguments as above, we can infer that every module of $G^{k}$ is also a
module of $G^{k+1}$. Since all modules of $G$ are by assumption also
modules of $G^{k}$, they are also modules of $G^{k+1}$.  
\hfill{$\circ$}
\smallskip	

\par\noindent\emph{Proof of Statement (a).}
\par\noindent 
By Claim 1., every module of $G$ is a module of $G^{k}$. Thus, if
there is a subset $M\subseteq V$ that is not a module of $G^k$, then $M$ is
not a module of $G$. Since we have already shown that all modules $N_V\in
\mathcal M^{k+1}$ of $G^{k+1}$ are not modules of $G^{k}$, we conclude that
none of the modules $N_V\in \mathcal M^{k+1}$ can be a module of $G$. This
implies statement (a).
\hfill{$\diamond$}

\smallskip
\par\noindent\emph{Proof of Statement (b).}
\par\noindent By Lemma \ref{lem:strongHG} and construction of $G^k$, we
find that all strong prime modules of $G^k$ are also strong prime modules
of $G^{k-1}$. Therefore, by induction, any strong prime module of $G^k$ is
also a strong prime module of $G$.
 
It remains to show that the children of the chosen strong prime module $M$
in $G^k$ are also the children of $M$ in $G$. In other words, we must show
that for the maximal modular partitions we have that $\Pmax(G[M]) =
\Pmax(G^k[M]))$ for each $k\geq 1$.

We again proceed by induction: By construction, if $M$ is the chosen strong
prime module of $G^0$ that does not contain any other strong prime module
in $G^0$, then only children of $M$ in $G^0$ are merged to obtain
$G^1$. Therefore, let $M$ be the chosen strong prime module of $G^1$ that
does not contain any other strong prime module in $G^1$ and that is used to
obtain the graph $G^{2}$.  Lemma \ref{lem:strongHG} implies that $M$ is a
strong prime module of $G$ and
$\Pmax(G[M])=\Pmax(G^0[M])=\Pmax(G^1[M])$. Thus, only children of prime
modules of $G$ have been merged to obtain the graph $G^2$. Assume the
statement is true for $k$. By analogous arguments as in the step from $G^0$
to $G^1$, we can show that for the chosen strong prime module $M$ in $G^k$
that does not contain any other strong prime of $G^k$ to obtain $G^{k+1}$,
we have $\Pmax(G^{k-1}[M_k])=\Pmax(G^k[M])$. Thus, only children of prime
modules of $G^{k-1}$ have been merged to obtain the graph
$G^{k+1}$. However, since each strong prime module of $G^k$ is a strong
prime module of $G$ and since by induction hypothesis $\Pmax(G^{k-1}[M]) =
\Pmax(G[M])$ we obtain that $\Pmax(G^{k}[M]) = \Pmax(G[M])$, and thus only
children of prime modules of $G$ have been merged to obtain the graph
$G^{k+1}$.  \hfill{$\diamond$}
\bigskip

Finally, recall that $E^{k} \subseteq \{\{x,y\}\in \Fo \mid x\in N_V, y\in
V\setminus N_V, N_V\in \mathcal M^{k}\}= \bigcup_{N\eqcl \T}
F_{\Go}(\Merge_{M_i\in N} M_i \to N_V)\subseteq F_\mathcal{M}$, where $\T$
is the twin relation applied to $\Go[M]/\Pmaxi{k}$ of the chosen strong
prime module $M$ in $G^k$ that does not contain any other strong prime
module.  By construction and because any module contained in $\mathcal M' =
\{N_V\in M^j\mid 1\leq j\leq k\}$ is obtained by merging the children of
strong prime modules $G$ only and since the children of strong prime
modules are in particular modules of $G$, we have $\bigcup_{j=1}^{k} E^{j}
\subseteq \bigcup_{j=1}^{k}\bigcup_{N_V\in \mathcal M^j}
F_{\Go}(\Merge_{M\in N} M \to N_V)\subseteq F_\mathcal{M}$, where $N$
comprises all modules that have been merged to obtain the respective module
$N_V$ contained in $\mathcal M'$.

This iterative process may lead to the situation that $\mathcal
M'\subsetneq \mathcal M$, where $\mathcal M$ denotes the set of all modules
of $\Go$ that have been obtained by merging modules of $G$.  However, since
$E^{k} \subseteq \Fo$ and, in particular, $E^{k}\neq \emptyset$ if and only
if $G^{k}$ is a non-cograph (and thus contains prime modules), and because
$|F^{k+1}|<|F^k|$, the iteration necessarily terminates for some finite $n$
with $F^n=\emptyset$ and thus $G^n = (V, E\DELTA (\cup_{j=1}^{n} E^j)) =
\Go$ and $\cup_{i=1}^n E^i = \Fo$.  Since $\cup_{i=1}^n E^i \subseteq
F_{\mathcal{M}'} \subseteq F_{\mathcal{M}}\subseteq \Fo$ we can finally
conclude that $F_{\mathcal{M}} = \Fo$.   
\end{proof}

Theorem \ref{thm:Edit=Merge} implies that every module-preserving optimal
edit set $\Fo$ (which always exist) can be expressed as optimal module
merge edits. Hence, instead of editing the induced $P_4$'s of a given graph
$G$ directly, one can \emph{equivalently} resolve the strong prime modules
by merging their children until no further prime module is left. At the
first glance this result seems to be only of theoretical interest since the
construction of optimal module merge edits is not easier than solving
the editing problem. 

The proof is constructive, however, and implies an alternative \emph{exact}
algorithm to solve the cograph editing problem, this time based on the
stepwise resolution of the prime modules. Of course it has exponential
runtime because in each step one needs to determine which of the modules
have to merged and which of the (non)edges have to edited to obtain new
modules. In particular, there are $2^n$ possible subsets for a prime module
with $n$ children in $\MDT(G)$ that give all rise to modules that can be
merged to new ones. Moreover, for each subset of modules that will be
merged, there are exponentially many possibilities in the number of
vertices to add or remove edges. 

\section{A modular-decomposition-based Heuristic for Cograph Editing}

The practical virtue of our result is that it suggests an alternative
strategy to construct heuristic algorithms for cograph editing. Our
starting point is Lemma \ref{lem:pair-merge}, which implies that we can
construct $\Fo$ by pairwise merging of children of prime modules.

Assume there is strong prime module $M$ and $\mb M$ denotes a subset
children of $M$ w.r.t.\ $MDT(G)$ that have been merged w.r.t.\ $\Go$.  Now,
instead of merging all modules at once, one can perform the merging process
step by step. Take e.g.\ $M_{1}$ and $M_2$ of $\mb M$ and define
$F_{1,2}\subseteq \Fo$ as the set of all edits that have been used to merge
$M_{1}$ and $M_2$ so that they become twins in $\Go[M]$, delete $M_1$ from
$\mb M$ and replace $M_{2}$ by $M_{1}\cup M_2$ in $\mb M$ and $\Fo$ by
$\Fo\setminus F_{1,2}$. Now all vertices in $M_{1}\cup M_2$ have the same
\out{M_{1}\cup M_2}-neighbors in $G[M]\DELTA F_{1,2}$.  Repeating this
procedure reduces the size of $\mb M$ by one element in each round and
eventually terminates with $F(\mb M):=\cup_{i=1}^{l-1} F_{i,i+1}\subseteq
\Fo$.

This strategy can be applied to all sets $\mb{M}^1=\mb{M}, \mb{M}^2, \ldots
\mb{M}^r$ that contain modules that are children of $M$ in $\MDT(G)$ and
that have been merged to some new module $\mc{M}^1=\mc{M}, \mc{M}^2, \ldots
\mc{M}^r$, respectively.  Hence, $\cup_{i=1}^r F(\mb M^i) \subseteq
\Fo[M]$, where $\Fo[M]$ contains all (non-)edges $\{x,y\}\in \Fo$ with $x,y\in
M$.  Lemma \ref{lem:opt-modul-edit} implies that $\Fo[M]$ is optimal in
$G[M]$ and hence, the modules in $\mb{M}^1, \mb{M}^2, \ldots \mb{M}^r$
cannot be merged with fewer edits than in $\Fo[M]$. It follows that
$\cup_{i=1}^r F(\mb{M}^i) = \Fo[M]$.  Thus, there is always an optimal edit
set $\Fo$ that can be expressed by means of successively \emph{pairwise}
merge module edit, as done above. Algorithm \ref{alg:coH} summarizes this
approach in the form of pseudocode. 

\begin{algorithm}[htbp]
  \caption{Simple Cograph Editing Heuristic.\newline \emph{ Two functions,
    \texttt{get-module()} and \texttt{get-module-pair-edit()}, influence the
    practical efficiency, see text for details of their specification.}  }
\label{alg:coH}
\begin{algorithmic}[1]
  \STATE \textbf{INPUT:} A graph $G=(V,E)$;
  \STATE Compute $\MD(G)$
  \WHILE{$M =$\texttt{get-module(}$\MD(G)$\texttt{)}}
  \IF{$G[M]$ is a spider}
  \STATE Optimally edit $G[M]$ to a cograph by application of  
  \texttt{EDP4} \cite{Liu:11,Liu:12}
  \ELSE
  \STATE Let $M_1,\ldots,M_K$ be the children of $M$ in $\MDT(G)$, 
  i.e., $\{M_1,\ldots,M_K\} = \Pmax(G[M])$.
  \STATE $(M_i,M_j, F_{i,j})=$ 
  \texttt{get-module-pair-edit(}$\{M_1,\ldots,M_K\}$\texttt{)} 
  \STATE $G\gets (V,E(G)\DELTA F_{i,j})$
  \ENDIF
  \ENDWHILE		
  \STATE \textbf{OUTPUT:} The cograph $G^*$;
\end{algorithmic}
\end{algorithm}

Algorithm \ref{alg:coH} contains two points at which the choice a
particular module or a particular pair of modules affects performance and
efficiency. First, the function \texttt{get-module()} returns a strong
prime module that does not contain any other prime module and returns
\texttt{false} if there is no such module, i.e., if $G$ is a cograph.
Second, subroutine \texttt{get-module-pair-edit()} extracts from
$\{M_1,\ldots,M_K\}$ a pair of modules. Ideally, these should satisfy the
the following two conditions:
\begin{itemize}
\item[{(i)}] $M_i$ and $M_j$ have a minimum number of edits so that the
  \out{M_i\cup M_j}-neighborhood in $G[M]$ becomes identical for
  all $x,y\in {M_i\cup M_j}$ among all pairs  in $\Pmax(G[M])$, and 
\item[{(ii)}] additionally maximizes the number of removed $P_4$'s in
  $G[M]$ after applying these edits.
\end{itemize}

\begin{lemma}
  If \texttt{get-module-pair-edit()} is an ``oracle'' that always returns a
  correct pairs $M_i$ and $M_j$ together with the respective edit set $F_{i,j}$
  that is used to merged them and
  \texttt{get-module()} returns and arbitrary strong prime module that does
  not contain any other prime module, then Alg.\ \ref{alg:coH} computes an
  optimally edited cograph $\Go$ in $O(p\Lambda h(n))\le O(n^2 h(n))$ time,
  where $p$ denotes the number of strong prime modules,
  $\Lambda=\max|\Pmax(M)|$ among all strong prime modules of $G$, and 
  $h(n)$ is the cost for evaluating \texttt{get-module-pair-edit()}.
\end{lemma}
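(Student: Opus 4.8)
The plan is to prove the statement in two parts — \emph{correctness} (the loop terminates and returns an \emph{optimally} edited cograph $G^*$) and \emph{complexity} (the running-time bound). Both rest on machinery already available: correctness is essentially Theorem~\ref{thm:Edit=Merge} read as an algorithm, so the real work is to pin down the ``oracle'' hypothesis and to check that handing out strong prime modules in an arbitrary order does no harm, while complexity is an iteration count built on Lemma~\ref{lem:strongHG}. Throughout, write $\opt(H)$ for the minimum cardinality of a cograph edit set of a graph $H$, and $H\DELTA F$ for the graph $(V,E(H)\DELTA F)$.

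For correctness I would first make the oracle hypothesis precise. Let $G'$ be the graph currently held by the algorithm and $M$ the strong prime module returned by \texttt{get-module()}. A \emph{correct} answer $(M_i,M_j,F_{i,j})$ is one for which there is a module-preserving \emph{optimal} edit set $\Fo'$ of $G'$ such that, writing $H':=G'\DELTA\Fo'$, the children $M_i,M_j\in\Pmax(G'[M])$ are merged w.r.t.\ $H'$ and $F_{i,j}=F_{H'}(M_i\merge M_j\to M_i\cup M_j)$; in particular $F_{i,j}\subseteq\Fo'$, and $F_{i,j}$ lies entirely inside $G'[M]$. I would then prove, by induction on the number of completed iterations, the invariant $\opt(G')+|F|=\opt(G)$, where $F$ denotes the set of edits the algorithm has applied so far. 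The step is short: if \texttt{get-module-pair-edit()} returns a correct $(M_i,M_j,F_{i,j})$ and $G'':=G'\DELTA F_{i,j}$, then since $F_{i,j}$ and $\Fo'\setminus F_{i,j}$ partition $\Fo'$ we have $G''\DELTA(\Fo'\setminus F_{i,j})=G'\DELTA\Fo'$, a cograph, whence $\opt(G'')\le|\Fo'|-|F_{i,j}|=\opt(G')-|F_{i,j}|$; as $\opt(G')\le|F_{i,j}|+\opt(G'')$ is clear, equality holds and the invariant survives. The spider branch preserves it by the correctness of \texttt{EDP4}~\cite{Liu:11,Liu:12} on $M$ together with Lemma~\ref{lem:opt-modul-edit}. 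That \texttt{get-module()} may return \emph{any} innermost strong prime module is justified by Lemma~\ref{lem:strongHG} (and Claim~1 in the proof of Theorem~\ref{thm:Edit=Merge}), which guarantees that every strong prime module arising during the run is a strong prime module of $G$ with unchanged maximal modular partition, so distinct merges never interfere; Lemma~\ref{lem:pair-merge} moreover makes the pairing order within one prime module immaterial. Finally, as long as $G'$ is not a cograph $\MDT(G')$ has a prime vertex, so \texttt{get-module()} does not return \texttt{false}; each iteration strictly decreases, in lexicographic order, the pair $\big(\#\{\text{unresolved strong prime modules}\},\ \textstyle\sum|\Pmax|\big)$, so the loop terminates, necessarily with a cograph $G^*$; and then $\opt(G^*)=0$, so by the invariant $|F|=\opt(G)$, i.e.\ $G^*$ is optimally edited.

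For the running time I would count iterations. By Lemma~\ref{lem:strongHG} every strong prime module that ever occurs is one of the $p$ strong prime modules of $G$. A spider module is eliminated by one \texttt{EDP4} call at cost $O(|V(G[M])|+|E(G[M])|)$. For a non-spider strong prime module $M$, each iteration merges two of its current children into one, strictly decreasing $|\Pmax(G'[M])|$; this number starts at $|\Pmax(G[M])|\le\Lambda$ and $M$ drops out of the list of prime vertices once its quotient stops being prime, so $M$ is resolved after $O(\Lambda)$ iterations. Hence there are $O(p\Lambda)$ iterations in total, each dominated by one evaluation of \texttt{get-module-pair-edit()} at cost $h(n)$; the remaining per-iteration work — applying $F_{i,j}$, maintaining $\MD(G)$, and the one-off \texttt{EDP4} calls — is polynomially bounded and is absorbed into the stated bound under the (unavoidable) assumption that $h(n)$ is at least the cost of scanning the current graph. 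This gives $O(p\Lambda h(n))$. Finally $p=|\MDP(G)|=O(n)$ since $\MD(G)$ has only $O(n)$ strong modules, and $\Lambda=\max_M|\Pmax(G[M])|\le n$, so $p\Lambda\le n^2$ and the bound is at most $O(n^2 h(n))$.

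The step I expect to be the main obstacle is the formulation of the oracle hypothesis and the check that it drives the induction: one must require each returned triple to be consistent with a \emph{global} module-preserving optimal edit set of the \emph{current} graph — not merely with an optimal editing of the subgraph $G[M_i\cup M_j]$ it touches — and that $F_{i,j}$ be exactly the corresponding module-merge edit, with no missing or superfluous (non-)edits, since it is precisely the inclusion $F_{i,j}\subseteq\Fo'$ that makes the telescoping of $\opt$ go through. Everything else is bookkeeping on top of Lemmata~\ref{lem:strongHG}, \ref{lem:pair-merge} and~\ref{lem:opt-modul-edit}, and the complexity part is routine.
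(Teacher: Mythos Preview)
Your proposal is correct and follows the same route as the paper, only far more carefully: the paper's own proof is a two-line sketch that defers correctness to Lemma~\ref{lem:pair-merge} and Theorem~\ref{thm:Edit=Merge} and obtains the complexity bound from the observation that each of the $p$ prime modules needs at most $O(\Lambda)$ pairwise merges, together with $p,\Lambda\le n$. Your explicit formulation of the oracle hypothesis and the telescoping invariant $\opt(G')+|F|=\opt(G)$ are precisely the content the paper leaves implicit.
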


\begin{proof}
  The correctness of Algorithm\ref{alg:coH} follows directly from Lemma
  \ref{lem:pair-merge} and Theorem \ref{thm:Edit=Merge}.
  
  The modular decomposition $\MD(G)$ can be computed in linear-time, 
  see \cite{CH:94,DGC:01,CS94,CS:99, TCHP:08}.
  Then, we have to resolve each of the $p$ modules and in each step in the worst
  case all modules have to be merged stepwisely, resulting an effort of
  $O(|\Pmax|)$ merging steps in each iteration. 
  Since $p\leq n$ and $\Lambda\leq n$ we obtain $O(n^2 h(n))$ as an upper
  bound.  
\end{proof}

In practice, the exact computation of the optimal editing pairs requires
exponential effort. Practical heuristics for
\texttt{get-module-pair-edit()}, however, can be implemented in polynomial
time. A simple heuristic strategy to find those pairs can be established as
follows: Mark all of the $O(\Lambda^2)$ pairs $(M_i,M_j)$ where the set
$\Gamma=N^{M_i}_G\DELTA N^{M_j}_G\setminus (M_i\cup M_j)$ of distinct
\out{M_i}- and \out{M_j}-neighbors that are not contained in $M_i$ and $M_j$
has minimum cardinality. Removing, resp., adding all edges $xy$ with $x\in
M_i\cup M_j$, $y\in \Gamma$ would yield a new module $M_i\cup M_j$ in the
updated graph. Among all those marked pairs take the pair for a final merge
that additionally removes a maximum number of induced $P_4$'s in the course
of adjusting the respective out-neighborhoods. This amounts to an efficient
method for detecting induced $P_4$'s.  A detailed numerical evaluation of
 heuristics for cograph editing will be discussed elsewhere.

\bibliographystyle{alpha}
\bibliography{biblio}
\label{sec:biblio}

\end{document}